\newtheoremstyle{localthm}
	{5pt} 
	{5pt} 
	{\sl} 
	{} 
	{\bf} 
	{{\rm.}} 
	{.7em} 
	{} 
\theoremstyle{localthm}
\newtheorem{Theorem}{Theorem}
\newtheorem{Corollary}[Theorem]{Corollary}
\newtheorem{Proposition}[Theorem]{Proposition}
\newtheorem{Lemma}[Theorem]{Lemma}
\newtheoremstyle{localrem}
	{5pt} 
	{5pt} 
	{\rm} 
	{} 
	{\bf} 
	{{\rm.}} 
	{.7em} 
	{} 
\theoremstyle{localrem}
\def\Ex{\mathop{\mathrm{I\!E}}\nolimits}
\def\Pr{\mathop{\mathrm{I\!P}}\nolimits}
\newcommand{\R}{\mathbb{R}}
\newcommand{\NN}{\mathcal{N}}
\def\Bin{\mathrm{Bin}}
\newcommand{\argmax}{\mathop{\mathrm{arg\,max}}}
\def\bs{\boldsymbol}
\def\hat{\widehat}
\def\tilde{\widetilde}
\def\Fhat{\hat{F}_n}
\def\FhatS{\hat{F}_n^{\rm S}}
\def\FhatL{\hat{F}_n^{\rm L}}
\def\FhatM{\hat{F}_n^{\rm M}}
\def\FhatZ{\hat{F}_n^{\rm Z}}
\def\Bhat{\hat{B}_n}
\def\BhatS{\hat{B}_n^{\rm S}}
\def\BhatL{\hat{B}_n^{\rm L}}
\def\BhatM{\hat{B}_n^{\rm M}}
\def\BhatZ{\hat{B}_n^{\rm Z}}
\def\eps{\varepsilon}
\def\V{\mathbb{V}}
\def\Nn{\bs{N}_{\!n}}
\def\Xn{\bs{X}_{\!n}}
\def\Rn{\bs{R}_{n}}
\begin{document}

\title{Inference on a Distribution Function\\
	from Ranked Set Samples}
\author{Lutz D\"umbgen and Ehsan Zamanzade\\
	(University of Bern and University of Isfahan)}
\date{October 2013, revised July 2018}
\maketitle

\begin{abstract}
Consider independent observations $(X_1,R_1)$, $(X_2,R_2)$, \ldots, $(X_n,R_n)$ with random or fixed ranks $R_i \in \{1,2,\ldots,k\}$, while conditional on $R_i = r$, the random variable $X_i$ has the same distribution as the $r$-th order statistic within a random sample of size $k$ from an unknown continuous distribution function $F$. Such observation schemes are utilized in situations in which ranking observations is much easier than obtaining their precise values. Two well-known special cases are ranked set sampling (\nocite{McIntyre_1952}{McIntyre 1952}) and judgement post-stratification (\nocite{MacEachern_etal_2004}{MacEachern et al.\ 2004}).

Within a general setting including unbalanced ranked set sampling we derive and compare the asymptotic distributions of three different estimators of the distribution function $F$ as $n \to \infty$ with fixed $k$: The stratified estimator of \nocite{Stokes_Sager_1988}{Stokes and Sager (1988)}, the nonparametric maximum-likelihood estimator of \nocite{Kvam_Samaniego_1994}{Kvam and Samaniego (1994)} and a moment-based estimator of \nocite{Chen_2001}{Chen (2001)}. Our functional central limit theorems generalize and refine previous asymptotic analyses. In addition we discuss briefly pointwise and simultaneous confidence intervals for the distribution function $F$ with guaranteed coverage probability for finite sample sizes.

The methods are illustrated with a real data example, and the potential impact of imperfect rankings is investigated in a small simulation experiment. All in all, the moment-based estimator seems to offer a good compromise between efficiency and robustness versus imperfect ranking, in addition to computational efficiency.
\end{abstract}

\vfill

\paragraph{Key words:}
Conditional inference; confidence band; empirical process; functional limit theorem; moment equations; imperfect ranking; relative asymptotic efficiency; unbalanced samples.

\addtolength{\baselineskip}{0.5\baselineskip}
\newpage

\section{Introduction}
\label{sec:introduction}

Ranked set sampling and judgement post-stratification are both sampling strategies in situations in which ranking several observations is possible and relatively easy without referring to exact values, whereas obtaining complete observations is much more involved. For instance, this occurs often in agriculture or forestry when the quantities of interest are yields on different plots or of different trees. Good overviews of theory and applications of ranked set sampling are given by \nocite{Wolfe_2004, Wolfe_2012}{Wolfe (2004, 2012)} and \nocite{Chen_etal_2004}{Chen et al.\ (2004)}. Let us explain the two sampling schemes just mentioned in a simple hypothetical example: Suppose we want to estimate the distribution of body heights among all men of age 20-25 in a certain population. Whenever we have obtained a precise measurement $X_i$ of such a man, we could compare him to $k-1$ additional young men and note the rank $R_i \in \{1,2,\ldots,k\}$ of $X_i$ within this small group without measuring the heights of the additional men precisely. This sampling scheme is called judgement post-stratification (JPS), see \nocite{MacEachern_etal_2004}{MacEachern et al.\ (2004)}. Alternatively, for each observation we could recruit a group of $k$ young men, rank them with respect to their heigths and then obtain the precise body height $X_i$ of the person with rank $R_i \in \{1,\ldots,k\}$ only. Here the ranks $R_1, R_2, \ldots, R_n$ have been specified in advance. This sampling scheme, called ranked set sampling (RSS), was introduced by \nocite{McIntyre_1952}{McIntyre (1952)}. If the empirical distribution of the ranks $R_i$ is (approximately) uniform on $\{1,\ldots,k\}$, one talks about balanced RSS, otherwise unbalanced RSS. For instance, if we are mainly interested in the upper tail of the distribution of body heights, we could favour larger ranks $R_i$.

In general we consider independent random pairs $(X_1,R_1)$, $(X_2,R_2)$, \ldots, $(X_n,R_n)$ with fixed or random ranks $R_i \in \{1,2,\ldots,k\}$. Conditional on $R_i = r$, the random variable $X_i$ has the same distribution as the $r$-th order statistic of a random sample of size $k$ from $F$. That means, $X_i$ has distribution function
\[
	F_r(x) \ := \ \Pr(X_i \le x \,|\, R_i = r)
	\ = \ B_r(F(x)) ,
\]
where $B_r : [0,1] \to [0,1]$ denotes the distribution function of the beta distribution with parameters $r$ and $k+1-r$. Thus for $p \in [0,1]$,
\[
	B_r(p)
	\ = \ \sum_{i=r}^k \binom{k}{i} p^i (1 - p)^{k-i}
	\ = \ \int_0^p \beta_r(u) \, du
\]
with
\[
	\beta_r(u) \ = \ C_r u^{r-1} (1 - u)^{k-r}
	\quad\text{and}\quad
	C_r \ = \ k \binom{k-1}{r-1} \ = \ k \binom{k-1}{k-r} ,
\]
see \nocite{David_Nagaraja_2003}{David and Nagaraja (2003)}. The vector $\Nn = (N_{nr})_{r=1}^k$ of stratum sizes
\[
	N_{nr} \ := \ \sum_{i=1}^n 1_{[R_i = r]}
\]
plays a key role. In RSS the ranks $R_1, R_2, \ldots, R_n$ and thus the whole vector $\Nn$ are fixed. In JPS, the $R_i$ are independent and uniformly distributed on $\{1,\ldots,k\}$, whence $\Nn$ follows a multinomial distribution $\mathrm{Mult}(n; 1/k, \ldots, 1/k)$.

Several estimators of the c.d.f.\ $F$ have been proposed. Of course one could just ignore the rank information and compute the empirical c.d.f.\ $\Fhat$,
\[
	\Fhat(x) \ := \ \frac{1}{n} \sum_{i=1}^n 1_{[X_i \le x]} .
\]
In the JPS setting this estimator is unbiased and $\sqrt{n}$-consistent. However, the stratified estimator
\[
	\FhatS \ := \ \frac{1}{\#\{r : N_{nr} > 0\}} \sum_{r \,:\, N_{nr} > 0} \hat{F}_{nr}
\]
with the empirical c.d.f.
\[
	\hat{F}_{nr}(x) \ := \ \frac{1}{N_{nr}} \sum_{i = 1}^n 1_{[R_i = r, \, X_i \le x]}
\]
within stratum $\{i : R_i = r\}$ is usually more efficient. It has been introduced and analyzed in a balanced RSS setting by \nocite{Stokes_Sager_1988}{Stokes and Sager (1988)}. Refinements and modifications of this estimator $\FhatS$ in the JPS setting have been proposed by \nocite{Frey_Ozturk_2011}{Frey and Ozturk (2011)} and \nocite{Wang_etal_2012}{Wang et al.\ (2012)}. In particular, these authors consider situations with small or moderate sample sizes so that some stratum sizes $N_{nr}$ may be zero or the empirical c.d.f.s $\hat{F}_{nr}$ may fail to satisfy order relations which are known for their theoretical counterparts $F_r$.

A second approach to estimating the c.d.f.\ $F$ which can also handle empty strata was introduced by \nocite{Kvam_Samaniego_1994}{Kvam and Samaniego (1994)}. They propose to estimate $F(x)$ by maximizing the conditional log-likelihood function
\begin{align*}
	L_n(x,p) \
	&:= \ \sum_{i=1}^n \bigl[ 1_{[X_i \le x]} \log B_{R_i}(p) 
			+ 1_{[X_i > x]} \log(1 - B_{R_i}(p)) \bigr] \\
	&= \ \sum_{r=1}^k N_{nr} \bigl[ \hat{F}_{nr}(x) \log B_r(p) 
			+ (1 - \hat{F}_{nr}(x)) \log(1 - B_r(p)) \bigr]
\end{align*}
of the indicator vector $(1_{[X_i \le x]})_{i=1}^n$, given the rank vector $\Rn = (R_i)_{i=1}^n$. The resulting estimator $\FhatL$ is given by
\[
	\FhatL(x) \
	:= \ \argmax_{p \in [0,1]}
		L_n(x,p) .
\]
\nocite{Huang_1997}{Huang (1997)} provides a detailed asymptotic analysis of this estimator $\FhatL$ in the special setting when $n = k\ell$, $N_{nr} = \ell$ for $1 \le r \le k$, and $\ell \to \infty$.

A third approach, introduced by \nocite{Chen_2001}{Chen (2001)}, is to estimate $F$ by a moment equality for the naive empirical c.d.f.\ $\Fhat$. Note that
\[
	\Ex \bigl( n \Fhat(x) \,\big|\, \Rn \bigr)
	\ = \ \sum_{r=1}^k N_{nr} B_r(F(x)) .
\]
Hence one can estimate $F(x)$ by the unique number $\FhatM(x) \in [0,1]$ such that
\begin{equation}
\label{eq:Chen}
	n \Fhat(x) \ = \ \sum_{r=1}^k N_{nr} B_r(\FhatM(x)) .
\end{equation}
In the RSS setting with proportions $N_{nr}/n$ converging to fixed numbers $\pi_r > 0$ as $n \to \infty$, \nocite{Chen_2001}{Chen (2001)} proves asymptotic normality of $\sqrt{n} \bigl( \FhatM(x) - F(x) \bigr)$ for finitely many points $x$ and shows that the supremum norm of $\FhatM - F$ converges to zero in probability. (Note that \nocite{Chen_2001}{Chen (2001)} formulates the moment equality \eqref{eq:Chen} with $n \pi_r$ in place of $N_{nr}$, but this would introduce an unnecessary estimation bias.)

In Section~\ref{sec:properties} we present some elementary properties of the estimators $\FhatS$, $\FhatL$ and $\FhatM$ and comment briefly on the computation of the latter two. In addition we describe two methods to obtain pointwise and simultaneous confidence intervals for $F$, respectively. The former procedure is just an adaptation of a method by \nocite{Terpstra_Miller_2006}{Terpstra and Miller (2006)} and closely related to the estimator $\FhatM$. Inverting the underlying tests yields honest confidence intervals for any given quantile of $F$ as proposed by \nocite{Balakrishnan_Li_2006}{Balakrishnan and Li (2006)} for balanced RSS. The confidence bands are a generalization of the confidence bands described by \nocite{Stokes_Sager_1988}{Stokes and Sager (1988)}. Here it turns out that the estimator $\FhatM$ is particularly convenient to work with.

Section~\ref{sec:asymptotics} provides a detailed analysis of the asymptotic distribution of the estimators $\FhatS$, $\FhatL$ and $\FhatM$ as $n \to \infty$ while $k$ is fixed and $N_{nr}/n \to_p \pi_r > 0$ for $1 \le r \le k$. Our analyses provide linear stochastic expansions and functional Central Limit Theorems for the processes $\sqrt{n}(\Fhat^{\rm Z} - F)$, ${\rm Z} = {\rm S}, {\rm L}, {\rm M}$. These results generalize the findings of \nocite{Stokes_Sager_1988}{Stokes and Sager (1988)} about $\FhatS$, of \nocite{Huang_1997}{Huang (1997)} about $\FhatL$ in balanced RSS and of \nocite{Chen_2001}{Chen (2001)} and \nocite{Ghosh_Tiwari_2008}{Ghosh and Tiwari (2008)} about $\FhatM$. We obtain explicit expressions for the asymptotic covariance functions of $\sqrt{n}(\Fhat^{\rm Z} - F)$ which enable efficiency considerations. The most important findings are that (i) the estimator $\FhatL$ is always superior to the other two, (ii) the estimators $\FhatS$ and $\FhatM$ are asymptotically equivalent in case of $\pi_1 = \cdots = \pi_k = 1/k$, and (iii) in unbalanced settings the estimator $\FhatS$ can be substantially worse than the other two estimators. Moreover, the efficiency gain of $\FhatL$ over $\FhatM$ is bounded and typically rather small. In addition we analyze the estimators' asymptotic behavior in the tails of the distribution $F$ where they turn out to be essentially equivalent.

A detailed analysis of a real data example is presented in Section~\ref{sec:imperfect.rankings}. It involves population sizes of Swiss municipalities and illustrates that sampling from finite populations without replacement may render our confidence regions conservative, even if the rankings are not perfect. The impact of imperfect rankings itself is investigated in a small simulation study based on the model of \nocite{Dell_Clutter_1972}{Dell and Clutter (1972)}.

The main proofs are deferred to an appendix. Further technical details and additional material, including references to computer code in \nocite{R_2013}{R}, are collected in a supplement.

\section{Computation of the estimators and exact inference}
\label{sec:properties}

\paragraph{Computations.}
In what follows let $X_{(1)} < X_{(2)} < \cdots < X_{(n)}$ be the order statistics of $X_1, X_2, \ldots, X_n$, augmented by $X_{(0)} := -\infty$ and $X_{(n+1)} := \infty$. One can easily verify that for $\mathrm{Z} = \mathrm{S}, \mathrm{M}, \mathrm{L}$, the estimator $\FhatZ$ is constant on each interval $[X_{(y)}, X_{(y+1)})$, $0 \le y \le n$, where $\FhatZ \equiv 0 $ on $[X_{(0)}, X_{(1)})$ and $\FhatZ \equiv 1$ on $[X_{(n)}, X_{(n+1)})$.

While the computation of the stratified estimator $\FhatS$ is straightforward, the estimators $\FhatM$ and $\FhatL$ may be computed numerically by running a suitable bisection algorithm $n-1$ times. Concerning $\FhatM$, note that $\sum_{r=1}^k N_{nr} B_r(p)$ is continuous and strictly increasing in $p \in [0,1]$ with boundary values $0$ and $1$. Hence for $ 1 \le y < n$ and $X_{(y)} \le x < X_{(y+1)}$, the estimator $\FhatM(x)$ is the unique solution $p \in (0,1)$ of $\sum_{r=1}^k N_{nr} B_r(p) = y$.

As to $\FhatL$, the next lemma provides some essential properties of the log-likelihood function $L_n(\cdot,\cdot)$. Its proof is given in the supplement.

\begin{Lemma}
\label{lem:compute.FhatL}
For any $x \in \R$, the function $L_n(x,\cdot) : [0,1] \to [-\infty,0]$ is continuous and continuously differentiable on $(0,1)$. Its derivative $L_n'(x,p) := \partial L_n(x,p)/\partial p$ is strictly decreasing in $p \in (0,1)$ and equals
\[
	L_n'(x,p) \ = \ \sum_{r=1}^k N_{nr} w_r(p) \bigl[ \hat{F}_{nr}(x) - B_r(p) \bigr]
\]
with the auxiliary function
\[
	w_r(p) \ = \ \frac{\beta_r}{B_r(1 - B_r)}(p)
	\ = \ \frac{\beta_r(p)}{B_r(p) B_{k+1-r}(1-p)} .
\]
Moreover, in case of $X_{(1)} \le x < X_{(n)}$, the limits of $L_n'(x,\cdot)$ at the boundary of $(0,1)$ are equal to $L_n'(x,0) = \infty$ and $L_n'(x,1) = -\infty$.
\end{Lemma}

According to this lemma, for $y \in \{1,\ldots,n-1\}$ and $X_{(y)} \le x < X_{(y+1)}$, the value of $\FhatL(x)$ is the unique number $p \in (0,1)$ such that
\[
	\sum_{r=1}^k N_{nr} w_r(p) \bigl[ \hat{F}_{nr}(X_{(y)}) - B_r(p) \bigr] \ = \ 0 .
\]
The computation of $\FhatM$ and $\FhatL$ for one single data set is of similar complexity. There is, however, an important difference: The vector $\bigl( \FhatM(X_{(y)}) \bigr)_{y=1}^{n-1}$ depends solely on the vector $\Nn = (N_{nr})_{r=1}^k$ of stratum sizes. Hence if we want to simulate the conditional distribution of $\FhatM$, given $\Rn$, we have to compute the vector $\bigl( \FhatM(X_{(y)}) \bigr)_{y=1}^{n-1}$ only once. By way of contrast, the vector $\bigl( \FhatL(X_{(y)}) \bigr)_{y=1}^{n-1}$ depends on the whole matrix $(N_{nry})_{1 \le r \le k, 1 \le y \le n}$ of frequencies $N_{nry} = N_{nr} \hat{F}_{nr}(X_{(y)}) = \sum_{i=1}^n 1_{[R_i = r, \, X_i \le X_{(y)}]}$. For given $\Nn$ there are
\[
	\frac{n!}{N_{n1}! \, N_{n2}! \, \cdots \, N_{nk}!}
\]
possibilities for that matrix, and this number grows exponentially with $n$, unless $\Nn$ is extremely unbalanced. As a consequence, for each new data set we have to compute $\FhatL$ anew, even if $\Nn$ remains unchanged.

\paragraph{Basic distributional properties.}
From now on we condition on the rank vector $\Rn = (R_i)_{i=1}^n$. Hence the vector $\Nn = (N_{nr})_{r=1}^k$ of stratum sizes is viewed as a fixed vector, and all probabilities, expectations and distributional statements refer to the conditional distribution of $\Xn = (X_i)_{i=1}^n$, given $\Rn$.

All estimators $\Fhat$, $\FhatS$, $\FhatM$ and $\FhatL$ are distribution-free in the following sense: Let $\Bhat$, $\BhatS$, $\BhatM$ and $\BhatL$ be defined analogously with raw observations from the uniform distribution on $[0,1]$. That means, we replace the random variables $X_1, X_2, \ldots, X_n$ with random variables $\tilde{X}_1, \tilde{X}_2, \ldots, \tilde{X}_n \in [0,1]$ which are independent, and $\tilde{X}_i$ has (conditional) distribution function $B_r$ if $R_i = r$. Then
\[
	\bigl( \FhatZ(x) \bigr)_{x \in \R}
	\quad \text{has the same distribution as} \quad
	\bigl( \hat{B}_n^{\rm Z}(F(x)) \bigr)_{x \in \R} ,
\]
where ${\rm Z} = {\rm S}, {\rm M}, {\rm L}$. Consequently it suffices to analyze the distribution of the random processes $\bigl( \hat{B}_n^{\rm Z}(t) \bigr)_{t \in [0,1]}$.

\paragraph{Pointwise confidence intervals.}
Recall that the estimator $\FhatM(x)$ was defined by matching $n \Fhat(x)$ to its (conditional) mean. Comparing $n \Fhat(x)$ with its distribution function yields exact confidence bounds for $F(x)$. This approach has been used by \nocite{Terpstra_Miller_2006}{Terpstra and Miller (2006)} in the framework of balanced ranked set sampling. In the present general framework this method works as follows: The (conditional) distribution of $n \Fhat(x)$ depends only on $\Nn$ and $F(x)$. Precisely, in case of $F(x) = p$, it has the same distribution as $\sum_{r=1}^k Y_{r,p}$ with independent random variables $Y_{1,p}$, $Y_{2,p}$, \ldots, $Y_{k,p}$, where
\[
	Y_{r,p} \ \sim \ \Bin(N_{nr}, B_r(p)) .
\]
Let $G_{\Nn,p}$ be the corresponding distribution function, i.e.
\[
	G_{\Nn,p}(y) \ := \ \Pr \Bigl( \sum_{r=1}^k Y_{r,p} \le y \Bigr) .
\]
This is not a standard distribution but a convolution of binomial distributions which can be computed numerically quite easily. Elementary considerations reveal that for any $y \in \{0,1,\ldots,n-1\}$, the distribution function $G_{\Nn,p}(y)$ is continuous and strictly decreasing in $p \in [0,1]$ with boundary values $G_{\Nn,0}(y) = 1$ and $G_{\Nn,1}(y) = 0$. Further, $G_{\Nn,p}(n) = 1$ and $G_{\Nn,p}(-1) = 0$ for all $p \in [0,1]$. Consequently, non-asymptotic p-values for the null hypotheses ``$F(x) \ge p$'' and ``$F(x) \le p$'' are given by $G_{\Nn,p}(n \Fhat(x))$ and $1 - G_{\Nn,p}(n \Fhat(x) - 1)$, respectively. These imply two different $(1 - \alpha)$-confidence regions for $F(x)$, namely,
\begin{align*}
	\bigl\{ p \in [0,1] : G_{\Nn,p}(n \Fhat(x)) \ge \alpha \bigr\} \
	&= \ \bigl[ 0, b_\alpha(\Nn,n \Fhat(x)) \bigr] , \\
	\bigl\{ p \in [0,1] : G_{\Nn,p}(n \Fhat(x) - 1) \le 1 - \alpha \bigr\} \
	&= \ \bigl[ a_\alpha(\Nn,n \Fhat(x)), 1 \bigr] .
\end{align*}
Here $b_\alpha(\Nn,y)$ is the unique solution $p \in (0,1)$ of the equation $G_{\Nn,p}(y) = \alpha$ if $0 \le y \le n-1$, and $b_\alpha(\Nn,n) = 1$. Likewise, $a_\alpha(\Nn,y)$ is the unique solution $p \in (0,1)$ of the equation $G_{\Nn,p}(y-1) = 1 - \alpha$ if $1 \le y \le n$, and $a_\alpha(\Nn,0) = 0$. Obviously one can combine lower and upper bounds and compute the \nocite{Clopper_Pearson_1934}{Clopper and Pearson (1934)} type $(1 - \alpha)$-con\-fid\-ence interval $\bigl[ a_{\alpha/2}(\Nn,n \Fhat(x)), b_{\alpha/2}(\Nn,n\Fhat(x)) \bigr]$ for $F(x)$.

Note that the computation of all these confidence bounds for $F$ boils down to determining only finitely many values $a_\lambda(\Nn,y)$ and $b_\lambda(\Nn,y)$ for $\lambda = \alpha, \alpha/2$ and $y \in \{0,1,\ldots,n\}$.

If we would ignore the ranks $R_i$ and just pretend that $X_1, X_2, \ldots, X_n$ are i.i.d.\ with distribution function $F$, then we would work with the distribution function $G_{n,p}$ of the binomial distribution $\Bin(n,p)$ instead of $G_{\Nn,p}$. This would lead to the traditional confidence bounds $a_{\alpha}^{\rm st}(n,n\Fhat(x))$, $b_{\alpha}^{\rm st}(n,n\Fhat(x))$ and the confidence interval of \nocite{Clopper_Pearson_1934}{Clopper and Pearson (1934)} with endpoints $a_{\alpha/2}^{\rm st}(n,n\Fhat(x))$, $b_{\alpha/2}^{\rm st}(n,n\Fhat(x))$ for $F(x)$.

\paragraph{Confidence bands.}
We may compute Kolmogorov-Smirnov type confidence bands for the unknown distribution function $F$ as follows: Let $\kappa_{}^{\rm Z}(\Nn,\alpha)$ be the $(1 - \alpha)$-quantile of the random variable $\|\hat{B}_n^{\rm Z} - B\|_\infty = \sup_{t \in [0,1]} \bigl| \hat{B}_n^{\rm Z}(t) - t \bigr|$. Then we may conclude with confidence $1 - \alpha$ that
\[
	F(x) \ \in \ \bigl[ \FhatZ(x) \pm \kappa_{}^{\rm Z}(\Nn,\alpha) \bigr]
	\quad\text{for all} \ x \in \R .
\]
The quantiles $\kappa_{}^{\rm Z}(\Nn,\alpha)$ may be estimated via Monte Carlo simulations. As explained before, this procedure is particularly convenient to implement for the moment-matching estimator $\FhatM$, whereas for the likelihood estimator $\FhatL$ it would be very computer-intensive.

\paragraph{Numerical example.}
Figure~\ref{fig:ConfBand} shows for $n = 210$ and $\Nn = (70,70,70), (100,70,40)$ the estimator value $\FhatM(X_{(y)})$ and the twosided $95\%$-confidence bounds $a_{2.5\%}(\Nn,y)$, $b_{2.5\%}(\Nn,y)$, $a_{2.5\%}^{\rm st}(n,y)$ and $b_{2.5\%}^{\rm st}(n,y)$ as a function of $y \in \{0,1,\ldots,n\}$. One sees that the additional rank information leads to more accurate confidence bounds in the balanced setting. In the unbalanced situation, ignoring the rank information and pretending the $X_i$ to be i.i.d.\ would induce a severe bias, and the coverage probabilities would be substantially smaller than $95\%$.

\begin{figure}[h]
\includegraphics[width=0.49\textwidth]{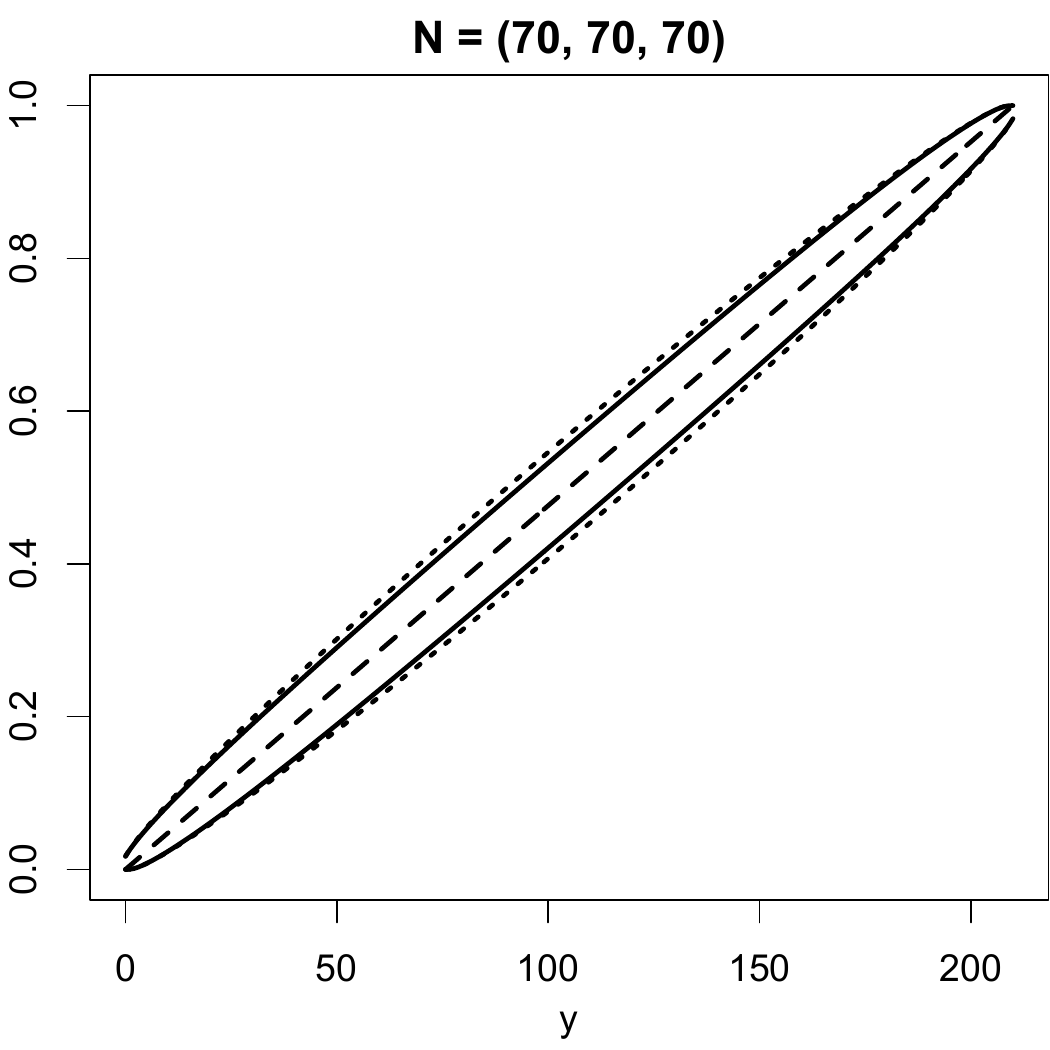}
\hfill
\includegraphics[width=0.49\textwidth]{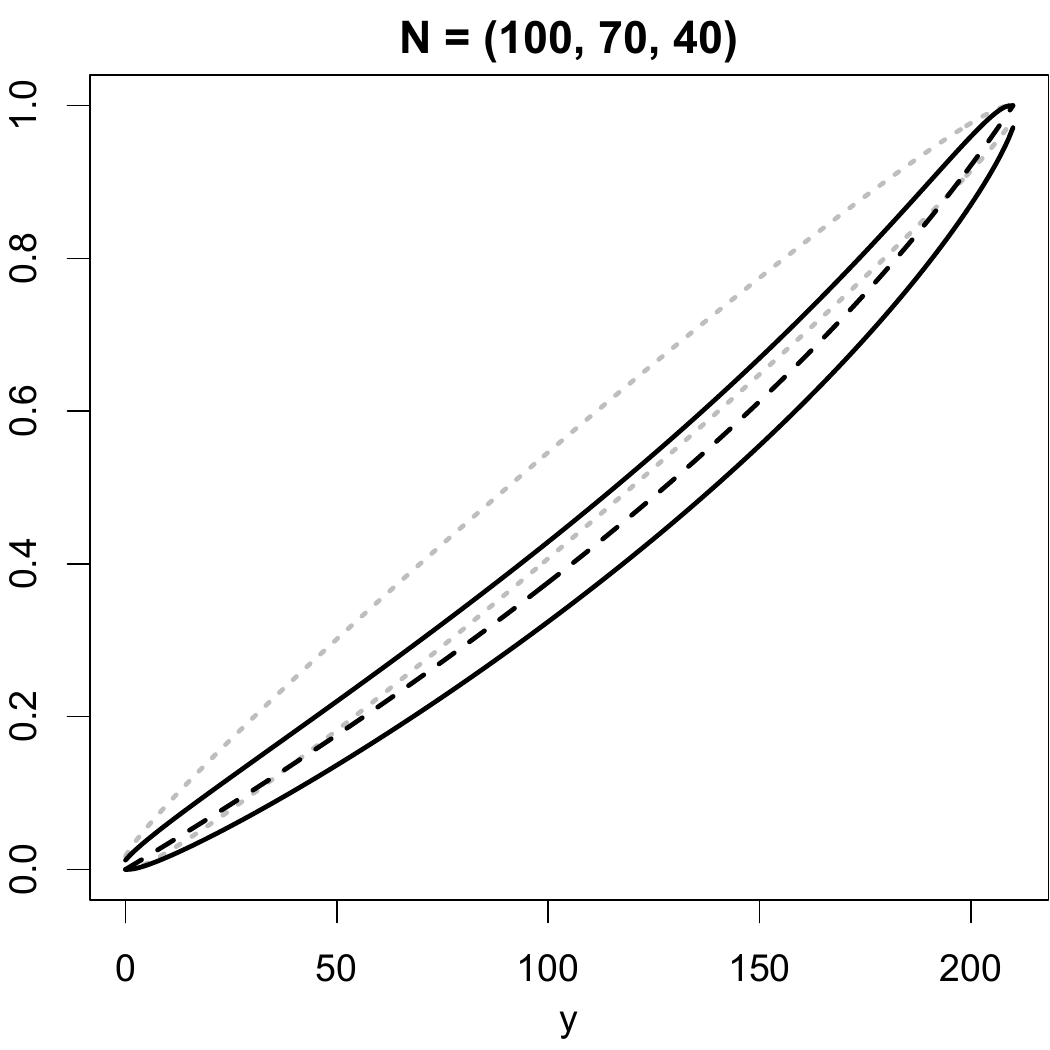}

\caption{Estimator $\FhatM$ and pointwise $95\%$-confidence intervals for $F$: For $y \in \{0,1,\ldots,n\}$ one sees the value $\FhatM(X_{(y)})$ (dashed), the exact confidence bounds $a_{2.5\%}(\Nn,y)$ and $b_{2.5\%}(\Nn,y)$ (solid), and the classical bounds $a_{2.5\%}^{\rm st}(n,y)$ and $b_{2.5\%}^{\rm st}(n,y)$ (dotted).}
\label{fig:ConfBand}
\end{figure}

For Kolmogorov-Smirnov type confidence bands centered at $\FhatM$ we estimated the quantiles $\kappa_{}^{\rm M}(\Nn,5\%)$ in $10^5$ Monte Carlo simulations and obtained
\[
	\hat{\kappa}_{}^{\rm M}(\Nn,5\%) \ = \ \begin{cases}
		0.0790 & \text{for} \ \Nn = (70,70,70), \\
		0.0812 & \text{for} \ \Nn = (100,70,40).
	\end{cases}
\]
For the usual Kolmogorov-Smirnov confidence band with $n = 210$ observations, the critical value would be $\kappa(n,5\%) = 0.0927$.

\paragraph{Unequal group sizes.}
The point estimators $\FhatL, \FhatM$ and the confidence regions just described may be extended easily to a more general setting with independent observations $(X_i,R_i,k_i)$, $1 \le i \le n$, where $k_i \ge 1$ is a fixed integer, $R_i$ is a fixed or random rank in $\{1,2,\ldots,k_i\}$, and
\[
	\Pr(X_i \le x \,|\, R_i = r)
	\ = \ B_{r,k_i+1-r}(F(x)) ,
\]
see for instance \nocite{Bhoj_2001}{Bhoj (2001)} or \nocite{Chen_2001}{Chen (2001)}. Here $B_{r,s}$ denotes the distribution function of the beta distribution with parameters $r$ and $s$.

\section{Asymptotic considerations}
\label{sec:asymptotics}

We consider the asymptotic behavior of the estimators $\BhatS$, $\BhatM$ and $\BhatL$ for fixed $k$ as $n \to \infty$ and
\[
	\pi_{nr} := \frac{N_{nr}}{n} \ \to \ \pi_r
	\quad\text{for} \ 1 \le r \le k .
\]
Recall that we condition on the rank vector $\Rn$. The former condition is satisfied with $\pi_r = 1/k$ both in \nocite{Huang_1997}{Huang's (1997)} setting and in the JPS setting almost surely. In general we assume that
\[
	\begin{cases}
		\pi_1, \ldots, \pi_k > 0
		& \text{in connection with} \ \BhatS , \\
		\pi_1, \pi_k > 0
		& \text{in connection with} \ \BhatM, \BhatL .
	\end{cases}
\]

\paragraph{Linear expansions and limit theorems.}
In what follows let
\[
	\V_{nr} \ := \ \sqrt{N_{nr}} (\hat{B}_{nr} - B_r) \circ B_r^{-1}
\]
for $1 \le r \le k$. Each stochastic process $\V_{nr}$ has the same distribution as a standardized empirical distribution function of $N_{nr}$ independent random variables with uniform distribution on $[0,1]$, see also the appendix. Moreover, the processes $\V_{n1}, \ldots, \V_{nk}$ are stochastically independent. Our first result shows that the three estimators $\BhatS$, $\BhatM$ and $\BhatL$ may be approximated by simpler processes involving $\V_{n1}, \ldots, \V_{nk}$.

\begin{Theorem}[Linear expansion]
\label{thm:Asymptotics1}
For ${\rm Z} = {\rm S}, {\rm M}, {\rm L}$ and any fixed $\delta \in [0,1/2)$,
\[
	\sup_{t \in (0,1)}
		\frac{ \bigl| \sqrt{n} (\hat{B}_n^{\rm Z}(t) - t) - \V_n^{\rm Z}(t) \bigr|}
		     {t^\delta(1-t)^\delta}
	\ \to_p \ 0 ,
\]
where
\[
	\V_n^{\rm Z}(t) \ := \ \sum_{r=1}^k \gamma_{nr}^{\rm Z}(t) \, \V_{nr}(B_r(t))
\]
with continuous functions $\gamma_{n1}^{\rm Z},\ldots,\gamma_{nk}^{\rm Z} : [0,1] \to [0,\infty)$. Precisely, for $t \in (0,1)$,
\begin{align*}
	\gamma_{nr}^{\rm S}(t) \
	&:= \ \frac{1}{k \sqrt{\pi_{nr}}} , \\
	\gamma_{nr}^{\rm M}(t) \
	&:= \ \sqrt{\pi_{nr}} \bigr/ \sum_{s=1}^k \pi_{ns} \beta_s(t) , \\
	\gamma_{nr}^{\rm L}(t) \
	&:= \sqrt{\pi_{nr}}\, w_r(t)
		\Big/ \sum_{s=1}^k \pi_{ns} w_s(t) \beta_s(t)
\end{align*}
with $w_r = \beta_r/(B_r(1 - B_r))$. Moreover,
\[
	\sup_{t \in (0,c] \cup [1-c,1)} \,
		\frac{|\V_n^{\rm Z}(t)|}{t^\delta (1 - t)^\delta}
	\ \to_p \ 0
	\quad\text{as} \ n \to \infty \ \text{and}Ê\ c \downarrow 0 .
\]
\end{Theorem}

The next theorem shows that all estimators $\FhatS, \FhatM, \FhatL$ are asymptotically equivalent in the tail regions. Moreover, the asymptotic behavior in the left and right tail is driven mainly by the processes $\V_{n1}$ and $\V_{nk}$, respectively.

\begin{Theorem}[Linear expansion in the tails]
\label{thm:Asymptotics1B}
For ${\rm Z} = {\rm S}, {\rm M}, {\rm L}$ and any fixed $\kappa \in [1/2,1)$,
\begin{align*}
	\sup_{t \in (0,c]}
		\frac{\bigl| \sqrt{n} (\hat{B}_n^{\rm Z}(t) - t) - \V_{n}^{(\ell)}(t) \bigr|}
		     {t^\kappa}
	\ &\to_p \ 0
\intertext{and}
	\sup_{t \in [1-c,1)}
		\frac{\bigl| \sqrt{n} (\hat{B}_n^{\rm Z}(t) - t) - \V_{n}^{(r)}(t) \bigr|}
		     {(1 - t)^\kappa}
	\ &\to_p \ 0
\end{align*}
as $n \to \infty$ and $c \downarrow 0$, where
\[
	\V_n^{(\ell)}(t) \ := \ \frac{\V_{n1}(B_1(t))}{k \sqrt{\pi_{n1}}}
	\quad\text{and}\quad
	\V_n^{(r)}(t) \ := \ \frac{\V_{nk}(B_k(t))}{k \sqrt{N_{nk}/n}}
\]
\end{Theorem}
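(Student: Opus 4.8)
\emph{Reduction and set-up.} The plan is to prove the left-tail statement in detail; the right-tail statement then follows by the reflection $t\mapsto 1-t$, which (through $1-B_r(p)=B_{k+1-r}(1-p)$) interchanges stratum $r$ with stratum $k+1-r$, sends each $\hat{B}_n^{\rm Z}$ to $1-\hat{B}_n^{\rm Z}(1-\cdot)$ of the reflected sample, and turns the hypothesis $\pi_k>0$ into $\pi_1>0$; in particular it carries $\V_{nk}$, and hence $\V_n^{(r)}$, over to $\V_{n1}$ and $\V_n^{(\ell)}$. For fixed ${\rm Z}$ and $t$ write $\hat p:=\hat{B}_n^{\rm Z}(t)$ and split the quantity of interest as $R_n^{\rm Z}(t)+S_n^{\rm Z}(t)$, where $R_n^{\rm Z}(t):=\sqrt n(\hat{B}_n^{\rm Z}(t)-t)-\V_n^{\rm Z}(t)$ is the linearization remainder and $S_n^{\rm Z}(t):=\V_n^{\rm Z}(t)-\V_n^{(\ell)}(t)$ collects the non-leading terms of the linear approximation. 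The only analytic input I need beyond Theorem~\ref{thm:Asymptotics1} is the classical weighted tail bound for the uniform empirical process: conditionally on $\Rn$ the variables $B_r(\tilde X_i)$, $i\in\JJ_{nr}$, are i.i.d.\ uniform on $[0,1]$ and $\V_{nr}$ is their empirical process, so for every $\alpha\in[0,1/2)$,
\[
\sup_{0<s\le\eta}\frac{|\V_{nr}(s)|}{s^\alpha}\ \to_p\ 0 \quad (n\to\infty,\ \eta\downarrow0), \qquad \sup_{0<s\le\eta_0}\frac{|\V_{nr}(s)|}{s^\alpha}=O_p(1),
\]
\emph{uniformly in the stratum size} $N_{nr}$; the uniformity matters because for $\hat{B}_n^{\rm M},\hat{B}_n^{\rm L}$ only $\pi_1,\pi_k>0$ is assumed, so middle strata may stay bounded.

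\emph{The non-leading terms $S_n^{\rm Z}$.} For ${\rm Z}={\rm S}$ one has the exact identity $\sqrt n(\hat{B}_n^{\rm S}(t)-t)=\V_n^{\rm S}(t)$, because $\hat{B}_n^{\rm S}=k^{-1}\sum_r\hat{B}_{nr}$ and $t=k^{-1}\sum_r B_r(t)$; thus $R_n^{\rm S}\equiv0$. In all three cases a short expansion of the denominators in \eqref{eq:gammaM} and \eqref{eq:gammaL}, using $\beta_1(t)=k+O(t)$ and $\beta_s(t)=O(t)$ for $s\ge2$ (and, for ${\rm L}$, the factorization $w_s=\tilde w_s/t$ discussed below), gives $\gamma_{n1}^{\rm Z}(t)=\gamma_{n1}^{\rm S}+O_p(t)$ and $\gamma_{nr}^{\rm Z}(t)=O_p(1)$, uniformly for $t\le c$. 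Hence
\[
|S_n^{\rm Z}(t)|\ \le\ \bigl|\gamma_{n1}^{\rm Z}(t)-\gamma_{n1}^{\rm S}\bigr|\,\bigl|\V_{n1}(B_1(t))\bigr| + \sum_{r=2}^{k}\bigl|\gamma_{nr}^{\rm Z}(t)\bigr|\,\bigl|\V_{nr}(B_r(t))\bigr|.
\]
Fix $\alpha\in(\kappa/2,1/2)$. The first summand is $O_p(t)\cdot O_p(t^\alpha)=O_p(t^{1+\alpha})$, which over $t^\kappa$ is $O_p(t^{1+\alpha-\kappa})$ with $1+\alpha-\kappa>0$. For $r\ge2$ the substitution $s=B_r(t)$ together with $B_r(t)\asymp t^r$ turns $|\V_{nr}(B_r(t))|/t^\kappa$ into a bounded multiple of $|\V_{nr}(s)|/s^{\kappa/r}$ with exponent $\kappa/r\le\kappa/2<1/2$. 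Invoking the displayed tail bound, $\sup_{t\le c}|S_n^{\rm Z}(t)|/t^\kappa\to_p0$ as $n\to\infty$ and $c\downarrow0$.

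\emph{The remainder $R_n^{\rm Z}$ for ${\rm Z}={\rm M},{\rm L}$.} This is the crux, since Theorem~\ref{thm:Asymptotics1} controls only the weight $t^\delta$ with $\delta<1/2$, not $t^\kappa$ with $\kappa\ge1/2$. The key observation is that $\sqrt n(\hat p-t)$ and $\V_n^{\rm Z}(t)$ have a parallel ratio structure. For ${\rm M}$, the mean-value expansion $B_r(\hat p)=B_r(t)+\beta_r(\xi_r)(\hat p-t)$ in the identity $\sum_r N_{nr}\hat{B}_{nr}(t)=\sum_r N_{nr}B_r(\hat p)$ yields
\[
\sqrt n(\hat p-t)=\frac{\sum_r\sqrt{nN_{nr}}\,\V_{nr}(B_r(t))}{\sum_r N_{nr}\beta_r(\xi_r)}, \qquad \V_n^{\rm M}(t)=\frac{\sum_r\sqrt{nN_{nr}}\,\V_{nr}(B_r(t))}{\sum_s N_{ns}\beta_s(t)},
\]
a \emph{common numerator}, so that $R_n^{\rm M}(t)=\V_n^{\rm M}(t)\bigl(\sum_s N_{ns}(\beta_s(t)-\beta_s(\xi_s))\bigr)\big/\bigl(\sum_r N_{nr}\beta_r(\xi_r)\bigr)$. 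For ${\rm L}$ the weights $w_r(p)\sim r/p$ are singular at $0$, but writing $w_r(p)=\tilde w_r(p)/p$ with $\tilde w_r(p):=p\,\beta_r(p)/(B_r(p)(1-B_r(p)))$, which extends smoothly to $\tilde w_r(0)=r$, the factor $1/\hat p$ cancels between numerator and denominator of $L_n'(t,\hat p)=0$; the resulting ratios for $\sqrt n(\hat p-t)$ and $\V_n^{\rm L}(t)$ then differ only by replacing $\tilde w_r(t)$ by $\tilde w_r(\hat p)$ and $\beta_s(t)$ by $\beta_s(\xi_s)$. In every case these differences are $O(|\hat p-t|)$ (bounded derivatives of $\beta_s,\tilde w_s$) and the denominators exceed a constant multiple of $N_{n1}\asymp n$, whence
\[
|R_n^{\rm Z}(t)|\ \le\ C\,|\V_n^{\rm Z}(t)|\,|\hat p-t|.
\]
By the previous paragraph $|\V_n^{\rm Z}(t)|=O_p(t^\alpha)$ uniformly, and Theorem~\ref{thm:Asymptotics1} with $\delta=\alpha$ gives $|\hat p-t|=O_p(t^\alpha/\sqrt n)$ uniformly on $(0,c]$; therefore $\sup_{t\le c}|R_n^{\rm Z}(t)|/t^\kappa=O_p(1)\,n^{-1/2}\sup_{t\le c}t^{2\alpha-\kappa}\to_p0$, as $2\alpha-\kappa>0$. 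Combined with the bound on $S_n^{\rm Z}$ this proves the left-tail assertion.

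\emph{Main obstacle.} I expect the delicate point to be keeping the product bound $|R_n^{\rm Z}|\le C|\V_n^{\rm Z}||\hat p-t|$ valid uniformly all the way into the extreme tail, including the range $t<\min_i\tilde X_i$ where $\hat p=0$. Two uniformities must be secured. First, the mean-value points $\xi_r$ (and the arguments of $\beta_s',\tilde w_s'$) have to stay inside the tail: this follows because $\sup_{t\le c}|\hat p-t|=O_p(n^{-1/2})\to_p0$, so $\hat p\le 2c$ with probability tending to one and hence $\beta_s,\tilde w_s$ and their derivatives are bounded while $\sum_r N_{nr}\beta_r(\xi_r)\ge N_{n1}k(1-2c)^{k-1}\asymp n$. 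Second, the factorization $w_r=\tilde w_r/p$ must genuinely remove the singularity so that no $1/\hat p$ survives as $\hat p\downarrow0$. Once these are in place the remainder is a product of two factors of size $t^\alpha$, one of which carries the gain $n^{-1/2}$; this extra $n^{-1/2}$ is exactly what upgrades the admissible weight from $\delta<1/2$ to $\kappa<1$.
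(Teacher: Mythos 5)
Your proposal is correct, but it takes a genuinely different route from the paper's. The paper never manipulates the estimator $\hat p = \hat{B}_n^{\rm Z}(t)$ directly: it re-runs the test-point argument from the proof of Theorem~\ref{thm:Asymptotics1} on the tail zone $(0,c_n]$, plugging the perturbed candidates $p_n(t) = t + \bigl(\V_n^{(\ell)}(t) + b t^\kappa\bigr)/\sqrt{n}$ into the monotone-in-$p$ estimating functions $n\Bhat(t) - \sum_{r} N_{nr} B_r(p)$ and $L_n'(t,p)$, expanding around stratum $1$ with remainders controlled via $B_r(t) \le k(k-1)t^2/2$ and $\beta_r(t) \le k2^{k-1}t$ for $r \ge 2$, and reading off the sign uniformly in $t$; the boundary case $\hat p = 0$ is absorbed by the conventions $L_n'(t,\cdot) := \infty$ and $B_r := 0$ on $(-\infty,0]$. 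You instead extract from the exact estimating equations a mean-value ratio representation of $\sqrt{n}(\hat p - t)$ sharing a (near-)common numerator with $\V_n^{\rm Z}(t)$, so that the linearization remainder becomes a product $O_p(t^\alpha)\cdot|\hat p - t| = O_p(n^{-1/2} t^{2\alpha})$, with the a priori rate $|\hat p - t| = O_p(n^{-1/2}t^\alpha)$ imported from Theorem~\ref{thm:Asymptotics1}; your treatment of the coefficient-difference term $S_n^{\rm Z}$ (via $B_r(t) \asymp t^r$ and the uniform-in-$N$ weighted bounds of Proposition~\ref{prop:WeightedUniformEP}) parallels the paper's treatment of the higher strata. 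What the paper's sign argument buys is that only one-sided inequalities are needed and no estimating-equation algebra has to be justified at the boundary; what your argument buys is modularity (Theorem~\ref{thm:Asymptotics1} is reused rather than re-proved in the tail) and an explicit remainder rate that makes transparent why the admissible weight upgrades from $t^\delta$, $\delta < 1/2$, to $t^\kappa$, $\kappa < 1$.

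Two details in your write-up need tightening, both fixable. First, for ${\rm Z} = {\rm L}$ your representation requires the modified score equation $\sum_{r} N_{nr}\, \tilde{w}_r(\hat p)\bigl[\hat{B}_{nr}(t) - B_r(\hat p)\bigr] = 0$ to hold also when $\hat p = 0$, i.e.\ when $t$ lies below all observations. It does: at $p = 0$ this expression equals $\sum_{r} N_{nr}\, r\, \hat{B}_{nr}(t)$, which vanishes exactly when all $\hat{B}_{nr}(t) = 0$, which is precisely the event $\hat p = 0$; this closes the obstacle you flagged but did not resolve. (For ${\rm Z} = {\rm M}$ no such care is needed, since the defining equation holds exactly at $\hat p = 0$.) Second, for ${\rm L}$ the numerators of the two ratios also differ ($\tilde{w}_r(\hat p)$ versus $\tilde{w}_r(t)$), so the clean bound $|R_n^{\rm L}(t)| \le C\,|\V_n^{\rm L}(t)|\,|\hat p - t|$ is not literally correct; the decomposition $A_1/D_1 - A_2/D_2 = (A_1 - A_2)/D_1 + A_2(D_2 - D_1)/(D_1 D_2)$ yields the bound $|R_n^{\rm L}(t)| \le C\bigl(|\V_n^{\rm L}(t)| + \max_{r}|\V_{nr}(B_r(t))|\bigr)|\hat p - t|$. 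Since both factors in parentheses are $O_p(t^\alpha)$ uniformly, your final rate and conclusion are unaffected.
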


It follows from Donsker's theorem for the empirical process that $\V_{nr}$ behaves asymptotically like a standard Brownian bridge process $\V = (\V(u))_{u \in [0,1]}$. Together with Theorem~\ref{thm:Asymptotics1} this leads to the following limit theorem:

\begin{Corollary}[Asymptotic distribution]
\label{cor:Asymptotics}
For ${\rm Z} = {\rm S}, {\rm M}, {\rm L}$, the stochastic process $\V_n^{\rm Z}$ converges in distribution in the space $\ell_\infty([0,1])$ to a centered Gaussian process $\V^{\rm Z}$ with continuous paths on $[0,1]$. Precisely, for $t \in [0,1]$,
\[
	\V^{\rm Z}(t) \ = \ \sum_{r=1}^k \gamma_r^{\rm Z}(t) \V_r(B_r(t))
\]
with independent standard Brownian bridges $\V_1, \ldots, \V_k$ and continuous functions $\gamma_1^{\rm Z}, \ldots, \gamma_k^{\rm Z} : [0,1] \to [0,\infty)$ given by
\begin{align*}
	\gamma_r^{\rm S}(t) \
	&:= \ \frac{1}{k \sqrt{\pi_r}} , \\
	\gamma_r^{\rm M}(t) \
	&:= \ \sqrt{\pi_r} \bigr/ \sum_{s=1}^k \pi_s \beta_s(t) , \\
	\gamma_r^{\rm L}(t) \
	&:= \ \begin{cases}
		\displaystyle
		\sqrt{\pi_r}\, w_r(t)
			\Big/ \sum_{s=1}^k \pi_s w_s(t) \beta_s(t)
			& \text{for} \ 0 < t < 1 , \\
		\sqrt{\pi_r} \, r / (\pi_1 k)
			& \text{for} \ t = 0 , \\[1ex]
		\sqrt{\pi_r} (k+1-r) / (\pi_k k)
			& \text{for} \ t = 1 .
		\end{cases}
\end{align*}
\end{Corollary}

Theorem~\ref{thm:Asymptotics1} and Corollary~\ref{cor:Asymptotics} show that all three estimators $\FhatS, \FhatM, \FhatL$ are root-$n$-consistent. In the asymptotically balanced case with
\begin{equation}
\label{eq:balanced}
	\pi_1 = \pi_2 = \cdots = \pi_k = 1/k ,
\end{equation}
one can easily deduce from $\sum_{s=1}^k \beta_s \equiv k$ that
\[
	\gamma_r^{\rm M} \ \equiv \ \gamma_r^{\rm S} \ = \ 1 / \sqrt{k}
	\quad\text{for} \ 1 \le r \le k .
\]
Hence in this particular case the estimators $\FhatS$ and $\FhatM$ are asymptotically equivalent. But otherwise $\FhatS$ may be substantially worse than $\FhatM$, as shown later.

\paragraph{Relative asymptotic efficiencies.}
Let $K$ be the covariance function of a standard Brownian bridge $\V$, i.e.\ $K(s,t) = \min\{s,t\} - st$ for $s,t \in [0,1]$. Then the covariance function $K^{\rm Z}$ of the Gaussian process $\V^{\rm Z}$ in Corollary~\ref{cor:Asymptotics} is given by
\[
	K^{\rm Z}(s,t) \ = \ \sum_{r=1}^k \gamma_r^{\rm Z}(s) \gamma_r^{\rm Z}(t)
		K \bigl( B_r(s), B_r(t) \bigr) .
\]
In particular, for $0 < t < 1$ the asymptotic distribution of $\sqrt{n} \bigl( \Bhat^{\rm Z}(t) - t \bigr)$ equals $\mathcal{N} \bigl( 0, K^{\rm Z}(t) \bigr)$ with $K^{\rm Z}(t) := K^{\rm Z}(t,t)$ given by
\begin{align*}
	K^{\rm S}(t) \
	&= \ \sum_{r=1}^k \frac{B_r(t)(1 - B_r(t))}{k^2 \pi_r} , \\
	K^{\rm M}(t) \
	&= \ \sum_{r=1}^k \pi_r B_r(t)(1 - B_r(t))
		\Big/ \Bigl( \sum_{s=1}^k \pi_s \beta_s(t) \Bigr)^2 , \\
	K^{\rm L}(t) \
	&= \ \sum_{r=1}^k \pi_r w_r(t)^2 B_r(t)(1 - B_r(t))
		\Big/ \Bigl( \sum_{s=1}^k \pi_s \beta_s(t) w_s(t) \Bigr)^2 \\
	&= \ 1 \Big/ \sum_{s=1}^k \pi_s \beta_s(t) w_s(t) .
\end{align*}
The latter equation follows from $w_r = \beta_r/(B_r(1 - B_r))$. The next result provides a detailed comparison of these asymptotic variances.

\begin{Theorem}[Relative asymptotic efficiencies]
\label{thm:Asymptotics3}
For arbitrary $t \in (0,1)$,
\[
	K^{\rm L}(t) \ \le \ K^{\rm S}(t)
\]
with equality for at most one $t \in (0,1)$. Furthermore,
\[
	K^{\rm L}(t) \ \le \ K^{\rm M}(t)
\]
with equality if, and only if, $t = 1/2$ and $k = 2$. On the other hand,
\begin{align*}
	\sup_{\pi} \frac{K^{\rm S}(t)}{K^{\rm L}(t)} \
	&= \ \infty , \\
	\sup_{\pi} \frac{K^{\rm M}(t)}{K^{\rm L}(t)} \
	&= \ \frac{\rho(t) + \rho(t)^{-1} + 2}{4} \ \le \ \frac{k + k^{-1} + 2}{4} ,
\end{align*}
where the suprema are over all tuples $(\pi_r)_{r=1}^k$ with strictly positive components summing to one, and
\[
	\rho(t) \ := \ \max_{r=1,\ldots,k} w_r(t) \Big/ \min_{r=1,\ldots,k} w_r(t)
	\ \le \ k .
\]
\end{Theorem}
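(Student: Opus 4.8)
The plan is to express all three variance functions through the common quantities $a_r := B_r(t)(1-B_r(t))$, $b_r := \beta_r(t)$ and $w_r(t) = b_r/a_r$, so that for fixed $t \in (0,1)$
\[
	K^{\rm S}(t,t) = \sum_{r=1}^k \frac{a_r}{k^2 \pi_r}, \quad
	K^{\rm M}(t,t) = \frac{\sum_r \pi_r a_r}{\bigl(\sum_s \pi_s b_s\bigr)^2}, \quad
	K^{\rm L}(t,t) = \frac{1}{\sum_s \pi_s b_s^2/a_s},
\]
where the last identity uses $\beta_s w_s = b_s^2/a_s$. I would also record the identity $\sum_{r=1}^k \beta_r(t) = k$, obtained by differentiating $t = k^{-1}\sum_r B_r(t)$. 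Both upper bounds for $K^{\rm L}$ are then instances of Cauchy--Schwarz. For $K^{\rm L}\le K^{\rm S}$ I apply it to the vectors with components $\sqrt{a_r}/(k\sqrt{\pi_r})$ and $\sqrt{\pi_r}\,b_r/\sqrt{a_r}$, whose squared norms are $K^{\rm S}$ and $1/K^{\rm L}$ and whose inner product equals $k^{-1}\sum_r b_r = 1$. For $K^{\rm L}\le K^{\rm M}$ I instead take $\sqrt{\pi_s}\,b_s/\sqrt{a_s}$ and $\sqrt{\pi_s a_s}$, whose inner product is $\sum_s \pi_s b_s$.

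The two equality characterizations come from the equality case of Cauchy--Schwarz. For the first bound, equality forces $a_r/(\pi_r b_r) = 1/(\pi_r w_r(t))$ to be independent of $r$; in particular $w_1(t)/w_k(t) = \pi_k/\pi_1$. Using the explicit forms $w_1(t) = k/\bigl[(1-(1-t)^k)(1-t)\bigr]$ and $w_k(t) = k/\bigl[t(1-t^k)\bigr]$ together with the symmetry $w_k(t)=w_1(1-t)$, the ratio $w_1/w_k$ is a strictly monotone function of $t$ on $(0,1)$, increasing from $1/k$ to $k$; hence this equation, and a fortiori equality, can hold for at most one $t$. For the second bound, equality forces $w_r(t)$ to be independent of $r$. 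By the same monotonicity this already requires $w_1(t)=w_k(t)$, i.e.\ $t=1/2$; and a direct evaluation at $t=1/2$ reduces the identity $w_1(1/2)=w_2(1/2)$ to $2^{k+1} = k^2+k+2$, which holds only for $k=2$. Conversely, for $k=2$ all $w_r$ coincide at $t=1/2$ by symmetry, so equality holds exactly there.

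The divergence $\sup_\pi K^{\rm S}/K^{\rm L} = \infty$ is immediate: since $K^{\rm S}/K^{\rm L} = \bigl(\sum_r a_r/(k^2\pi_r)\bigr)\bigl(\sum_s \pi_s b_s^2/a_s\bigr)$ with $a_r,b_s>0$ for $t\in(0,1)$, letting a single $\pi_r \downarrow 0$ while keeping the remaining mass away from $0$ drives the first factor to $\infty$ and leaves the second bounded below. For the last claim I substitute $q_s := \pi_s a_s$ and $w_s=b_s/a_s$ to obtain $K^{\rm M}/K^{\rm L} = \bigl(\sum_r q_r\bigr)\bigl(\sum_s w_s^2 q_s\bigr)/\bigl(\sum_s w_s q_s\bigr)^2 = \Ex[W^2]/\Ex[W]^2$, where $W$ equals $w_s$ with probability proportional to $q_s$. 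Since the $a_s$ are positive and fixed, varying $\pi$ lets $q$ range over all positive weight vectors, so the supremum equals that of $\Ex[W^2]/\Ex[W]^2$ over distributions supported in $[m,M]$ with $m=\min_r w_r(t)$ and $M=\max_r w_r(t)$. The pointwise bound $(M-W)(W-m)\ge 0$ yields $\Ex[W^2]\le (M+m)\Ex[W]-mM$, and optimizing the resulting bound over $\Ex[W]\in[m,M]$ gives the Kantorovich value $(M+m)^2/(4mM) = (\rho(t)+\rho(t)^{-1}+2)/4$, approached by concentrating the weights $q_s$ on the two extremal indices.

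The two inequalities and the two supremum evaluations are essentially mechanical once the Cauchy--Schwarz and Kantorovich structure is visible. The delicate point, and where I expect the real work to lie, is the equality analysis, specifically the claim that $w_1(t)/w_k(t)$ is strictly monotone on $(0,1)$: this single fact underpins both ``equality for at most one $t$'' and the reduction to $t=1/2$. I would establish it by showing that $\tfrac{d}{dt}\log\{w_1(t)/w_k(t)\}$, a rational expression in $t$, $t^k$ and $(1-t)^k$, is strictly positive on $(0,1)$, and then complete the $k=2$ case with the elementary observation that $2^{k+1}=k^2+k+2$ precisely when $k=2$.
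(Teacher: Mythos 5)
Your proposal is correct and follows essentially the same route as the paper's proof: the same variance formulas, the same equality conditions forced by proportionality (your Cauchy--Schwarz is the paper's Jensen inequality in disguise), the same strict monotonicity of $w_1/w_k$ with the reduction to $2^{k+1}=k^2+k+2$, and the same Kantorovich bound approached by concentrating mass on the two extremal indices. The only cosmetic differences are that the paper gets the monotonicity from the identity $w_k(t)/w_1(t)=\sum_{j=0}^{k-1}(1-t)^j\big/\sum_{j=0}^{k-1}t^j$ rather than a log-derivative computation, and phrases the Kantorovich step as $\Ex(W)\Ex(W^{-1})$ under weights proportional to $\pi_r\beta_r(t)$ instead of your $\Ex(W^2)/\Ex(W)^2$ under weights proportional to $\pi_r B_r(t)(1-B_r(t))$.
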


\paragraph{Numerical examples.}
In case of $k = 2$, the upper bound for $K^{\rm M}(t)/K^{\rm L}(t)$ equals $9/8 = 1.125$. More precisely,
\[
	\frac{\rho(t) + \rho(t)^{-1} + 2}{4}
	\ = \ 1 + \frac{u^2}{9 - u^2}
	\ \le \ 1.125
\]
with $u := 2t - 1 \in [-1,1]$, see the supplement for more details.

In case of $k = 3$, the upper bound for $K^{\rm M}(t)/K^{\rm L}(t)$ equals $4/3 \approx 1.333$. Figures~\ref{fig:Asymptotics.k3a} and \ref{fig:Asymptotics.k3b} show the asymptotic variance functions $K(\cdot)$ of $\Bhat$ and $K^{\rm Z}(\cdot)$ of $\BhatZ$ for ${\rm Z} = {\rm S}, {\rm M}, {\rm L}$ in the balanced and one unbalanced situation. Note that in the balanced setting, $\BhatS \equiv \BhatM$ and thus $K^{\rm S}(\cdot) \equiv K^{\rm M}(\cdot)$. In addition one sees the asymptotic relative efficiencies
\[
	E^{\rm Z}(t) \ := \ \frac{K^{\rm Z}(t)}{K^{\rm L}(t)}
\]
of $\BhatL$ versus $\BhatZ$ together with the upper bound
\[
	E_{\rm max}^{\rm M}(t) \ := \ \bigl( \rho(t) + \rho(t)^{-1} + 2 \bigr) / 4
\]
for $E^{\rm M}(t)$. One sees clearly that the inefficiency of $\BhatM$ versus $\BhatL$ is moderate whereas the inefficiency of $\BhatS$ may become substantial in unbalanced settings. Note also that in case of $\pi_1 > \pi_2 > \pi_3$ the accuracy in the left tail increases at the expense of larger errors in the right tail.

\begin{figure}
\includegraphics[width=0.49\textwidth]{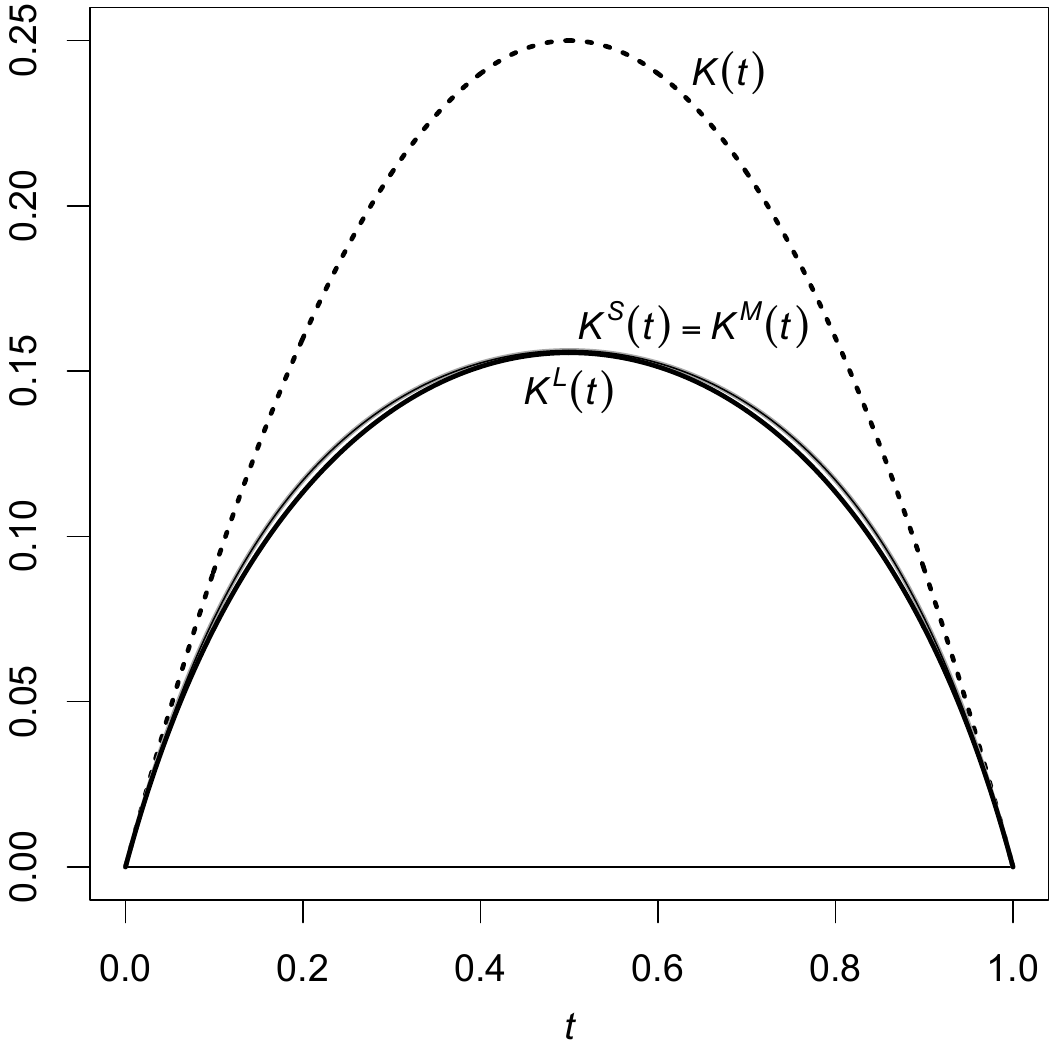}
\hfill
\includegraphics[width=0.49\textwidth]{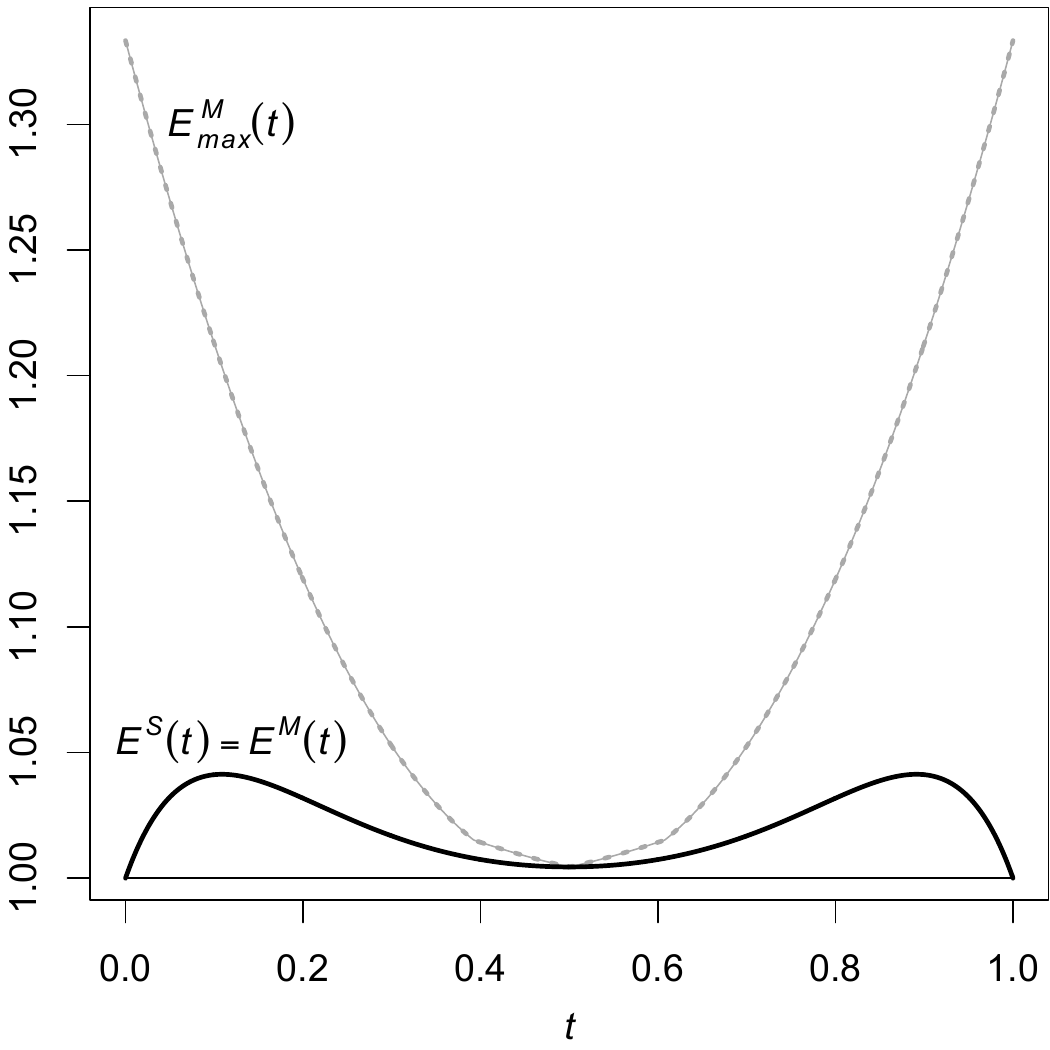}

\caption{Asymptotic variances of $\BhatL$, $\BhatS \equiv \BhatM$, $\Bhat$ (left panel) and relative efficiencies of $\BhatL$ versus $\BhatZ$ (right panel) in case of $\pi_1 = \pi_2 = \pi_3 = 1/3$.}
\label{fig:Asymptotics.k3a}
\end{figure}

\begin{figure}
\includegraphics[width=0.49\textwidth]{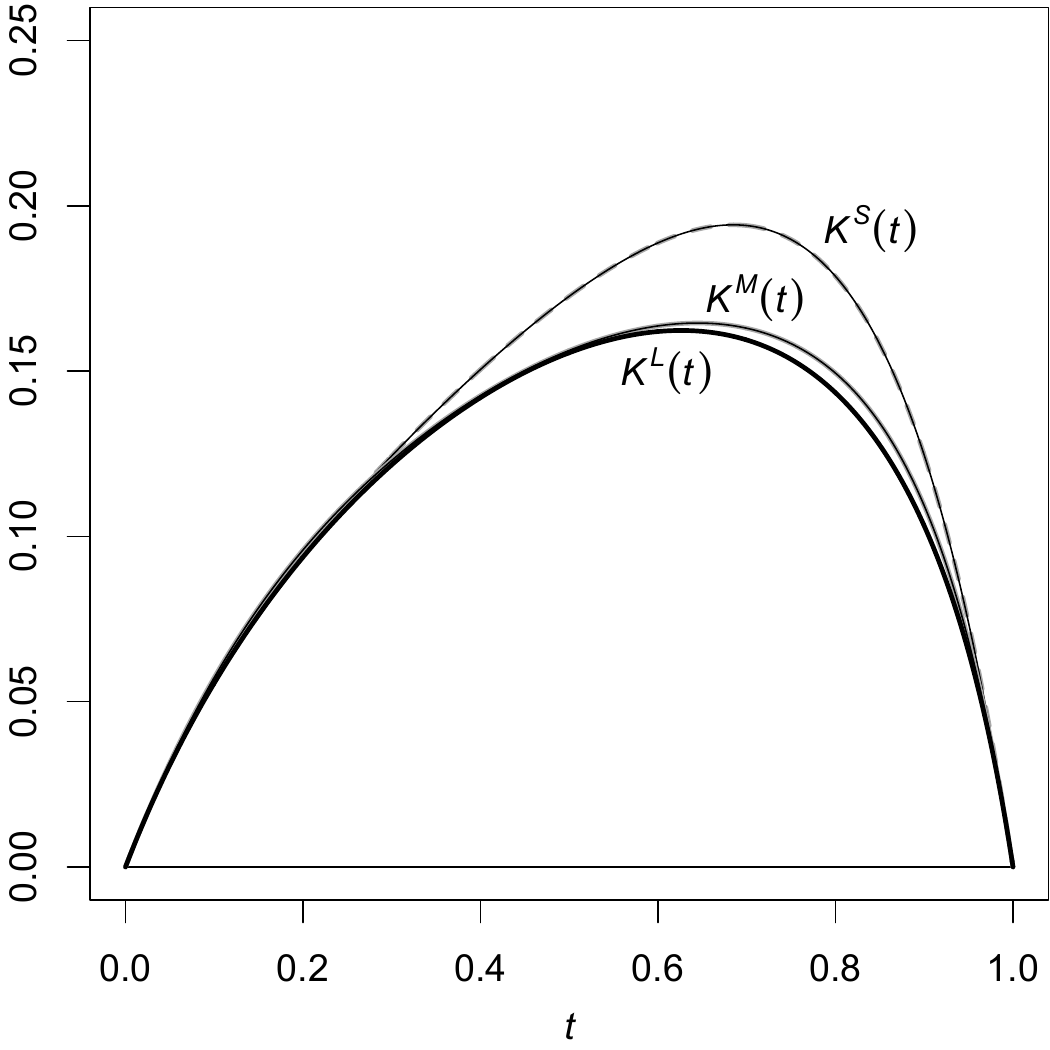}
\hfill
\includegraphics[width=0.49\textwidth]{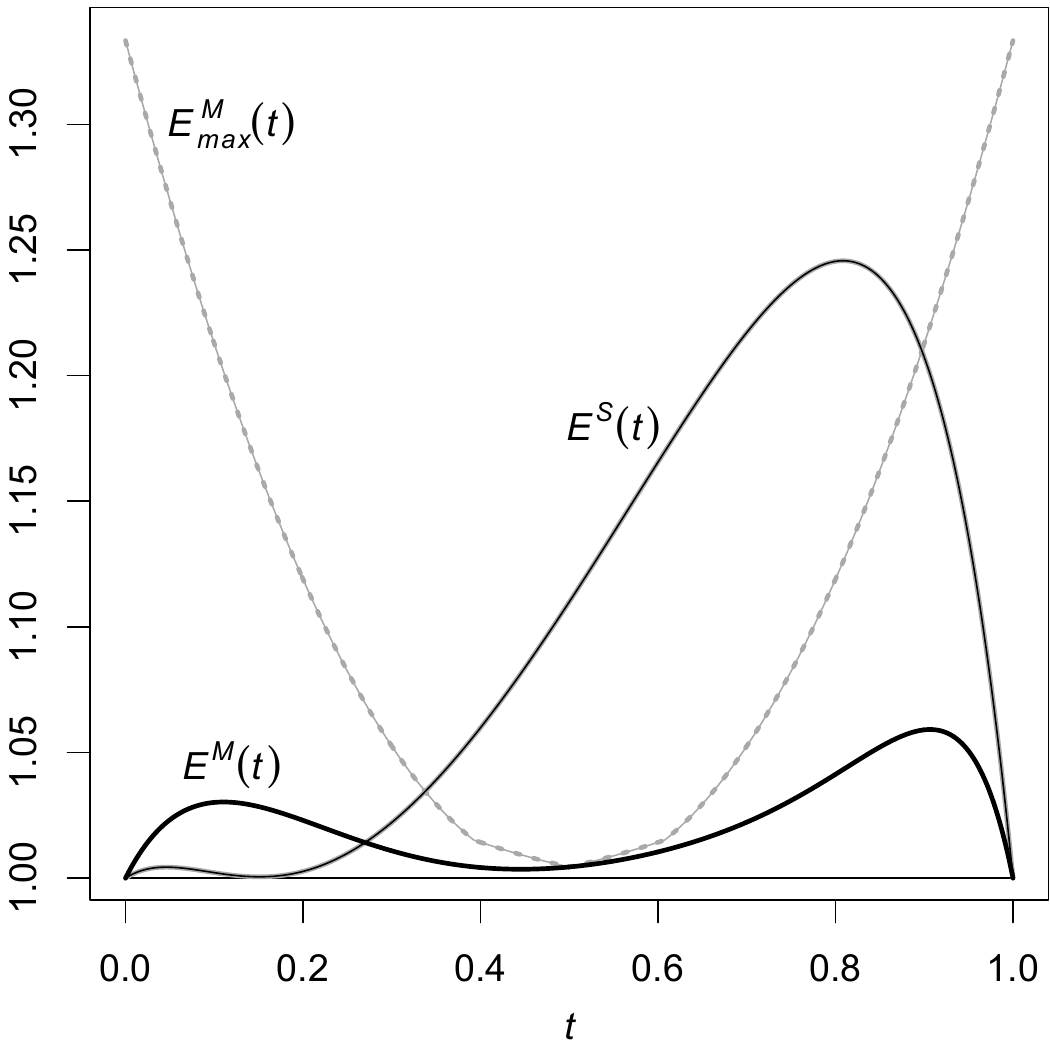}

\caption{Asymptotic variances of $\BhatS$, $\BhatM$, $\BhatL$ (left panel) and relative efficiencies of $\BhatL$ versus $\BhatZ$ (right panel) in case of $(\pi_1,\pi_2,\pi_3) = (10/21,7/21,4/21)$.}
\label{fig:Asymptotics.k3b}
\end{figure}

\paragraph{Implications for confidence intervals.}
One can deduce from Corollary~\ref{cor:Asymptotics} that $n^{1/2} \kappa_{}^{\rm Z}(\Nn,\alpha)$ converges to the $(1 - \alpha)$-quantile of the random supremum norm $\|\mathbb{V}^{\rm Z}\|_\infty$. Moreover, for any $x \in \R$ with $0 < F(x) < 1$, the pointwise confidence bounds satisfy
\begin{align}
\label{eq:Asymptotics.pw1}
	a_\alpha(\Nn,n \Fhat(x)) \
	&= \ \FhatM(x) - \frac{\sqrt{K^{\rm M}(F(x))}}{\sqrt{n}} \, \Phi^{-1}(1 - \alpha)
		+ o_p(n^{-1/2}) \\
\label{eq:Asymptotics.pw2}
	b_\alpha(\Nn,n \Fhat(x)) \
	&= \ \FhatM(x) + \frac{\sqrt{K^{\rm M}(F(x))}}{\sqrt{n}} \, \Phi^{-1}(1 - \alpha)
		+ o_p(n^{-1/2})
\end{align}
with $\Phi^{-1}$ denoting the standard Gaussian quantile function, see the supplement.

\section{A real data example and imperfect rankings}
\label{sec:imperfect.rankings}

\subsection{Population sizes of Swiss municipalities}
\label{subsec:Municip_CH}

Every five years, the Swiss Federal Office of Statistics releases data about all municipalities of Switzerland, including their population sizes. There are currently $2289$ communities, and the two most recent data collections are from 2010 and 2015. Suppose we would have wanted to estimate the distribution function $F$ of population sizes by the end of 2015 in early 2016. Back then only the data of 2010 would have been available, the data of 2015 having been released later in 2016 and corrected in 2017. In principle one could have approached each single municipality to obtain its population size by the end of 2015, but this would have been time-consuming of course. Hence one could have applied RSS sampling as follows: One chooses randomly $n = 210$ disjoint sets of $k = 3$ communities. Within the $i$-th set one determines the unit with rank $R_i$ according to population sizes in 2010 and obtains its precise population size $X_i$ by the end of 2015. The ranks $R_1,\ldots,R_n \in \{1,2,3\}$ are prespecified. If one is particularly interested in smaller municipalities, one could choose $\Rn$ such that, say, $\Nn = (100,70,40)$.

Having the complete data of 2010 and 2015, one can easily simulate this sampling scheme. Figure~\ref{fig:Municip.1} shows for one such sample the estimated distribution function $\FhatM$ together with pointwise and simultaneous $95\%$-confidence intervals as described in Section~\ref{sec:properties}. Since the distribution of population sizes is heavily right-skewed, the horizontal axis shows the decimal logarithms of population sizes. In the lower panel the point estimator $\FhatM$ is replaced with the true distribution function $F$, i.e.\ the empirical distribution function of all $2289$ population sizes in 2015.

\begin{figure}
\centering

\includegraphics[width=0.95\textwidth]{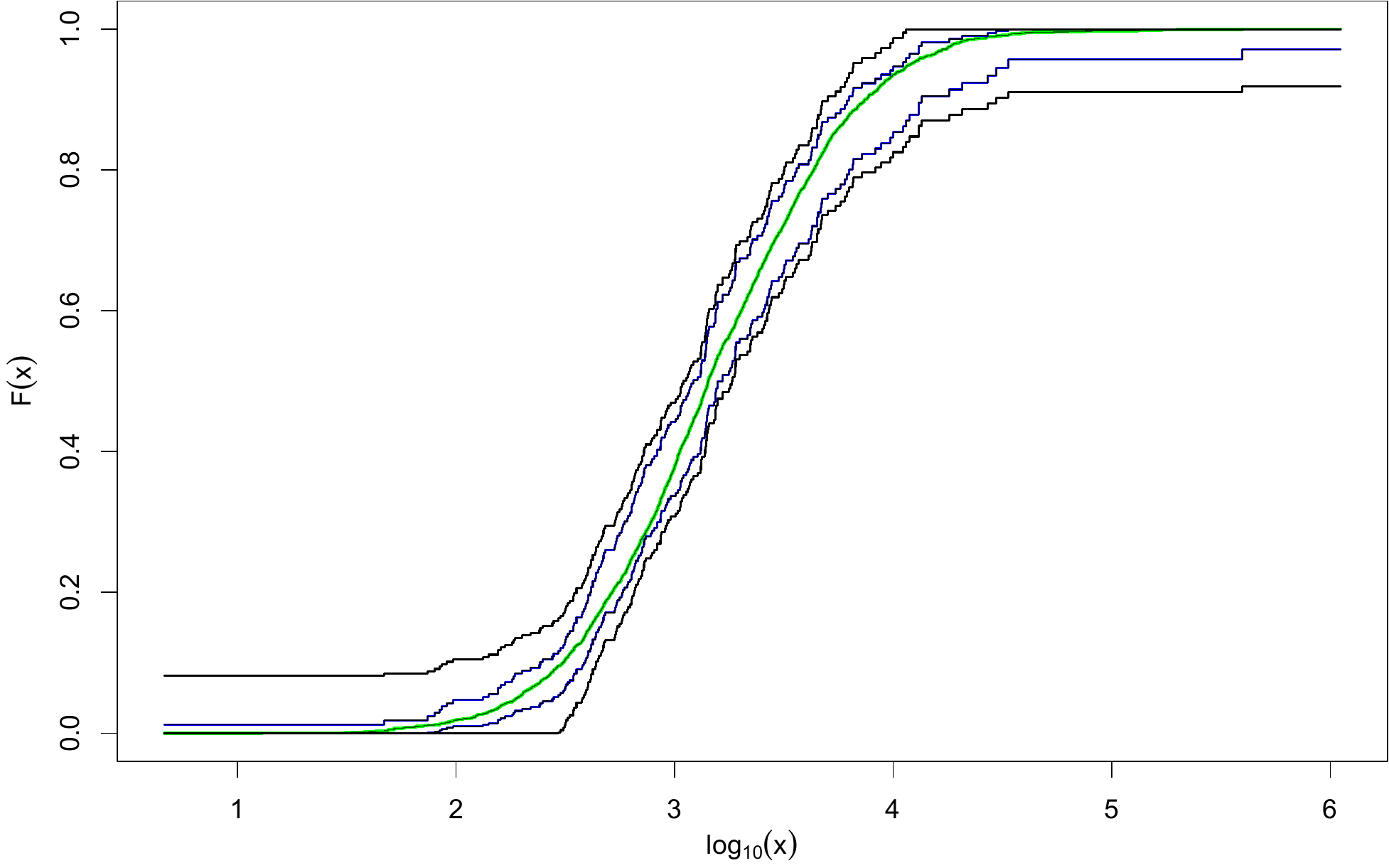}

\caption{Inference about the distribution of population sizes (Section~\ref{subsec:Municip_CH}) with $\Nn = (100,70,40)$. The smoother function is the true c.d.f.\ $F$. The inner and outer two step functions are the pointwise and simultaneous $95\%$-confidence band for $F$.}
\label{fig:Municip.1}
\end{figure}

We simulated this sampling scheme $10^5$ times and analyzed the performance of both $\FhatM$ and the confidence intervals. The Monte Carlo estimator of
\[
	\mathrm{BIAS}(x) \ := \ \Ex \FhatM(x) - F(x)
\]
was everywhere between $- 10^{-4}$ and $10^{-3}$, whereas the MC estimator of
\[
	\mathrm{RMSE}(x) \ := \ \bigl( \Ex \, (\FhatM(x) - F(x))^2 \bigr)^{1/2}
\]
was nowhere larger than $0.0263$. The left panel of Figure~\ref{fig:Municip.2} depicts these two functions $\mathrm{BIAS}$ and $\mathrm{RMSE}$. For each sample and any $x \in \R$ we obtained a pointwise and simultaneous $95\%$-confidence interval, denoted by $C_{\rm pw}(x)$ and $C_{\rm sim}(x)$, respectively. The MC estimator of the error probability $\Pr \bigl( F(x) \not\in C_{\rm pw}(x) \bigr)$ was nowhere larger than $4.22\%$, and the one of $\Pr \bigl( F(x) \not\in C_{\rm sim}(x) \ \text{for some} \ x \in \R \bigr)$ turned out to be smaller than $2.8\%$. The confidence intervals being conservative is probably a consequence of sampling without replacement, which results in more accurate estimators than sampling with replacement. The right panel of Figure~\ref{fig:Municip.2} shows MC estimates of the average widths
\[
	\mathrm{AW}_{\rm pw}(x) \ := \ \Ex \mathrm{width}(C_{\rm pw}(x))
\]
Here one sees clearly the effect of unbalanced sampling with $N_{n1} > N_{n2} > N_{n3}$, the benefit being shorter intervals in the left tail at the expense of longer intervals in the right tail.

\begin{figure}
\includegraphics[width=0.49\textwidth]{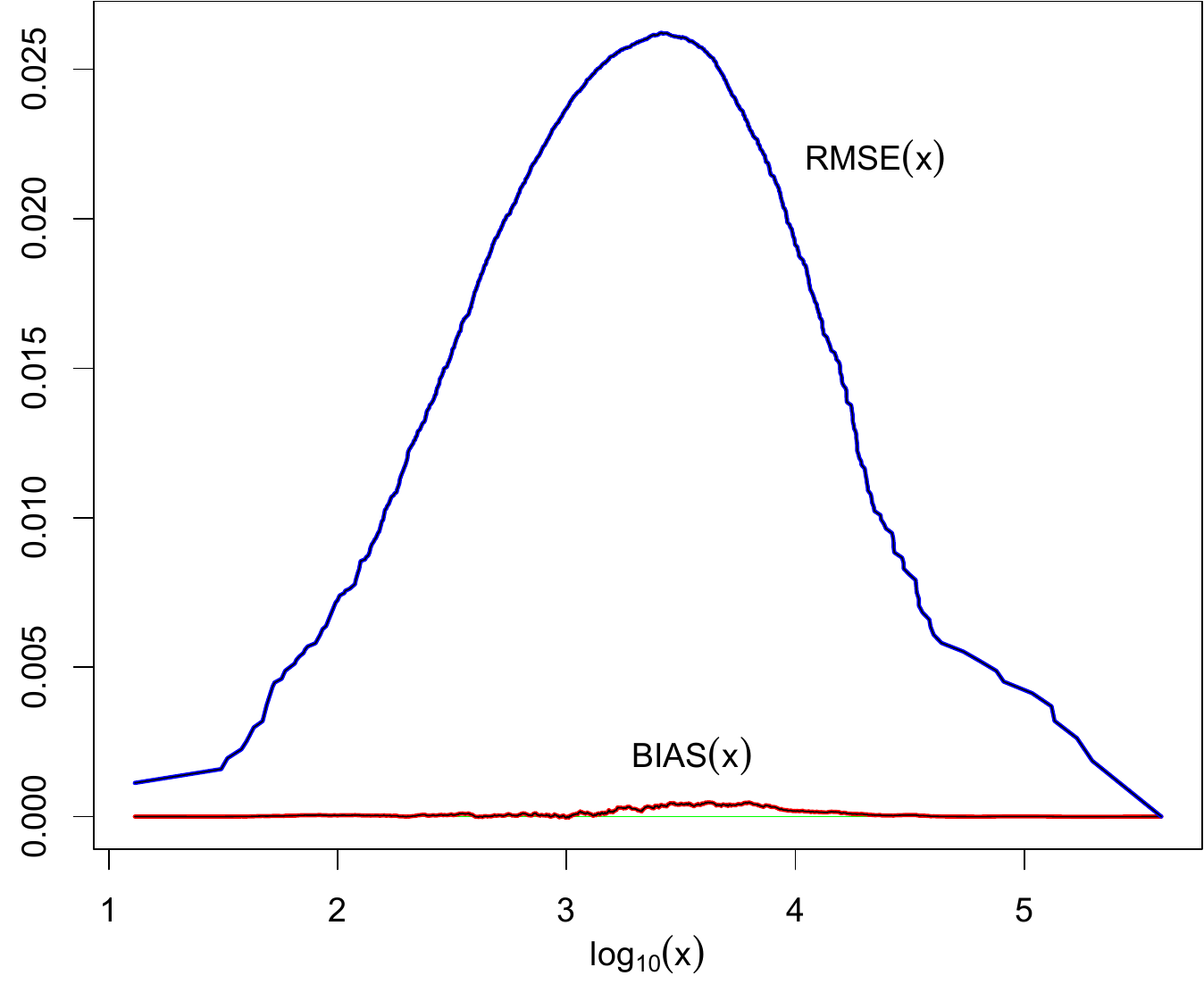}
\hfill
\includegraphics[width=0.49\textwidth]{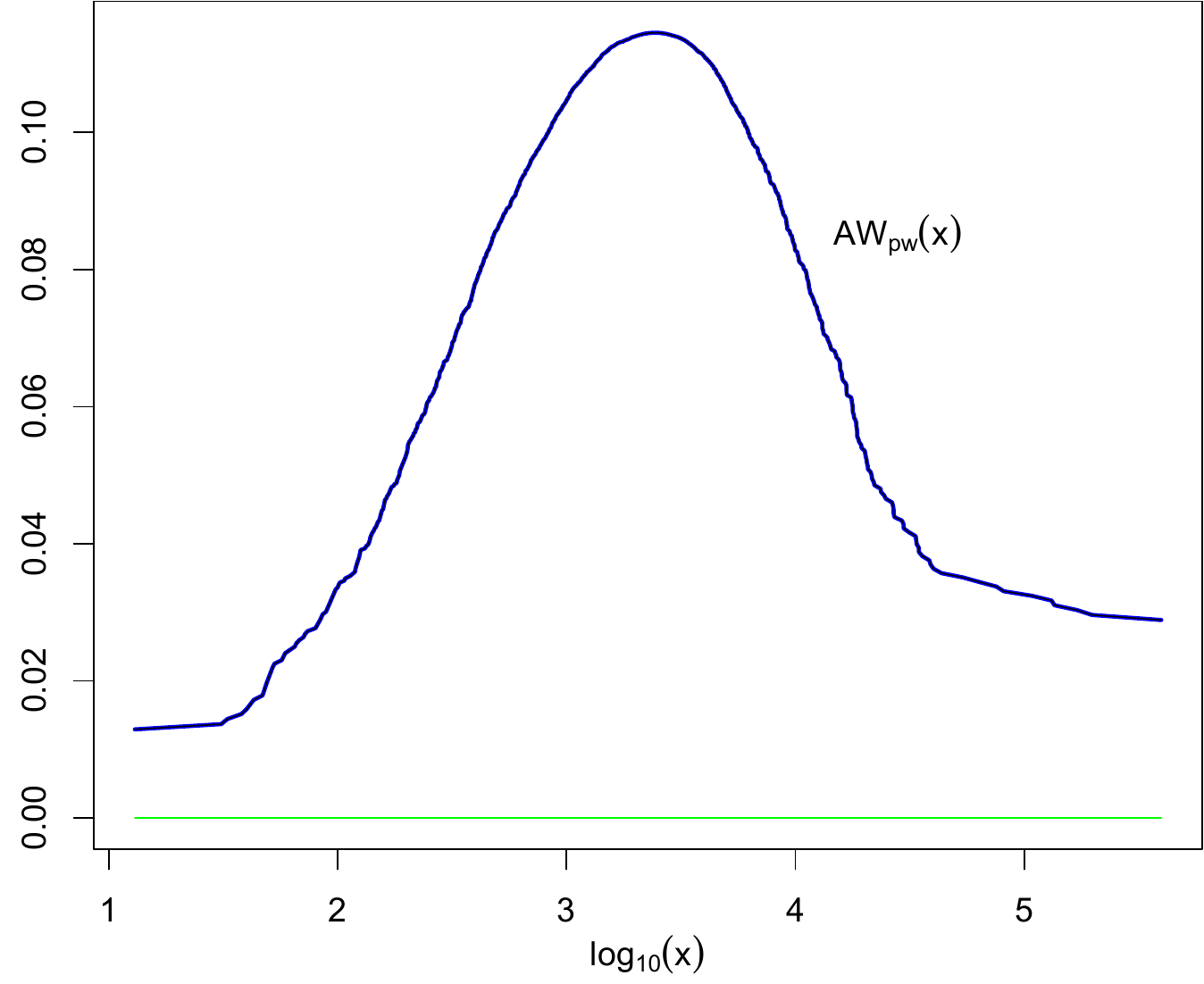}

\caption{Inference about the distribution of population sizes (Section~\ref{subsec:Municip_CH}) with $\Nn = (100,70,40)$. Left panel: bias and root mean squared error of $\FhatM$. Right panel: Average width of pointwise $95\%$-confidence band for $F$.}
\label{fig:Municip.2}
\end{figure}

Note that the ranking of municipalities within the $n$ groups of size $k = 3$ was based on the population sizes in 2010 and thus imperfect. Indeed, a reasonable model for the pairs of log-transformed population sizes in 2010 and 2015 seems to be a bivariate Gaussian distribution with correlation $0.9986$. As a consequence, in our MC simulations the average proportion of imperfect ranks $R_i$ turned out to be $3.1\%$.

Analogous simulations for $k = 4,5$ and different choices of $\Nn$ led to similar results. Enlarging $k$ without changing $n$ leads to larger coverage probabilities, presumably an effect of sampling without replacement, while the modulus of the bias of $\FhatM$ and the proportion of imperfect ranks get larger.

\subsection{Imperfect rankings}
\label{subsec:imperfect.rankings}

In case of sampling \textsl{with} replacement, the previous data example would fit the model of \nocite{Dell_Clutter_1972}{Dell and Clutter (1972)} for ranked set sampling with imperfect rankings quite well. They consider $2nk$ independent random variables $X_{ij} \sim F$ and $\eps_{ij} \sim \NN(0,\tau^2)$ with $1 \le i \le n$ and $1 \le j \le k$. Instead of the true rank of $X_{ij}$ among $X_{i1}, \ldots, X_{ik}$ one obtains the ranks
\[
	R_{ij} \ := \ \sum_{\ell=1}^k 1_{[Y_{i\ell} \le Y_{ij}]}
\]
of the concomitant variables $Y_{ij} := X_{ij} + \eps_{ij}$. If $\sigma > 0$ denotes the standard deviation of the $X_{ij}$, the correlation between $X_{ij}$ and $Y_{ij}$ equals $\rho = (1 + \tau^2/\sigma^2)^{-1/2}$. Finally we obtain for $1 \le i \le n$ the observation $(X_i,R_i) = (X_{i1}, R_{i1})$ in JPS and $(X_{iJ(i)}, R_i)$ in RSS, where $J(i)$ is the unique index in $\{1,\ldots,k\}$ such that $R_{iJ(i)} = R_i$.

In this model the stratified estimator $\FhatS$ is still unbiased, see \nocite{Presnell_Bohn_1999}{Presnell and Bohn (1999) for the RSS setting with $N_{n1},\ldots,N_{nk} > 0$ and \nocite{Dastbaravarde_etal_2016}{Dastbaravarde et al.\ (2016)} for the JPS setting. For that reason we considered $\FhatS$ as a gold standard in our simulation study: We simulated $10^5$ RSS data sets from this model with standard Gaussian distribution function $F = \Phi$, sample size $n = 210$ and different options for $\Nn$ and $\rho$. With these simulations we estimated the bias and root mean squared error,
\[
	\mathrm{BIAS}^{\rm Z}(x)
	\ := \ \Ex \FhatZ(x) - F(x)
	\quad\text{and}\quad
	\mathrm{RMSE}^{\rm Z}(x)
	\ := \ \bigl( \Ex \, (\FhatZ(x) - F(x))^2 \bigr)^{1/2} ,
\]
for ${\rm Z} = {\rm S}, {\rm M}, {\rm L}$. In addition we estimated the relative efficiency
\[
	\mathrm{RE}^{\rm Z}(x) \ := \ \mathrm{RMSE}^{\rm S}(x)^2 / \mathrm{RMSE}^{\rm Z}(x)^2
\]
of $\FhatZ$ versus the stratified estimator $\FhatS$.

Firstly we considered $N_{n1} = N_{n2} = N_{n3} = 70$. Here $\FhatS \equiv \FhatM \equiv \hat{F}_n^{}$. In Figure~\ref{fig:Dell.Clutter.1} one sees on the left hand side the functions $\mathrm{BIAS}^{\rm L}$ and $\mathrm{RMSE}^{\rm L}$ for three different values of the correlation $\rho$. While $\FhatS \equiv \FhatM$ is unbiased, the bias of $\FhatL$ gets worse as $\rho$ decreases. For all three estimators $\FhatZ$ the root mean squared error increases as $\rho$ decreases. The right hand side of Figure~\ref{fig:Dell.Clutter.1} depicts the relative efficiency function $\mathrm{RE}^{\rm L}$. As predicted by asymptotic theory, $\mathrm{RE}^{\rm L} > 1$ in case of $\rho = 1$, but for smaller correlations the relative efficiency drops below $1$ in the tails.

\begin{figure}
\includegraphics[width=0.49\textwidth]{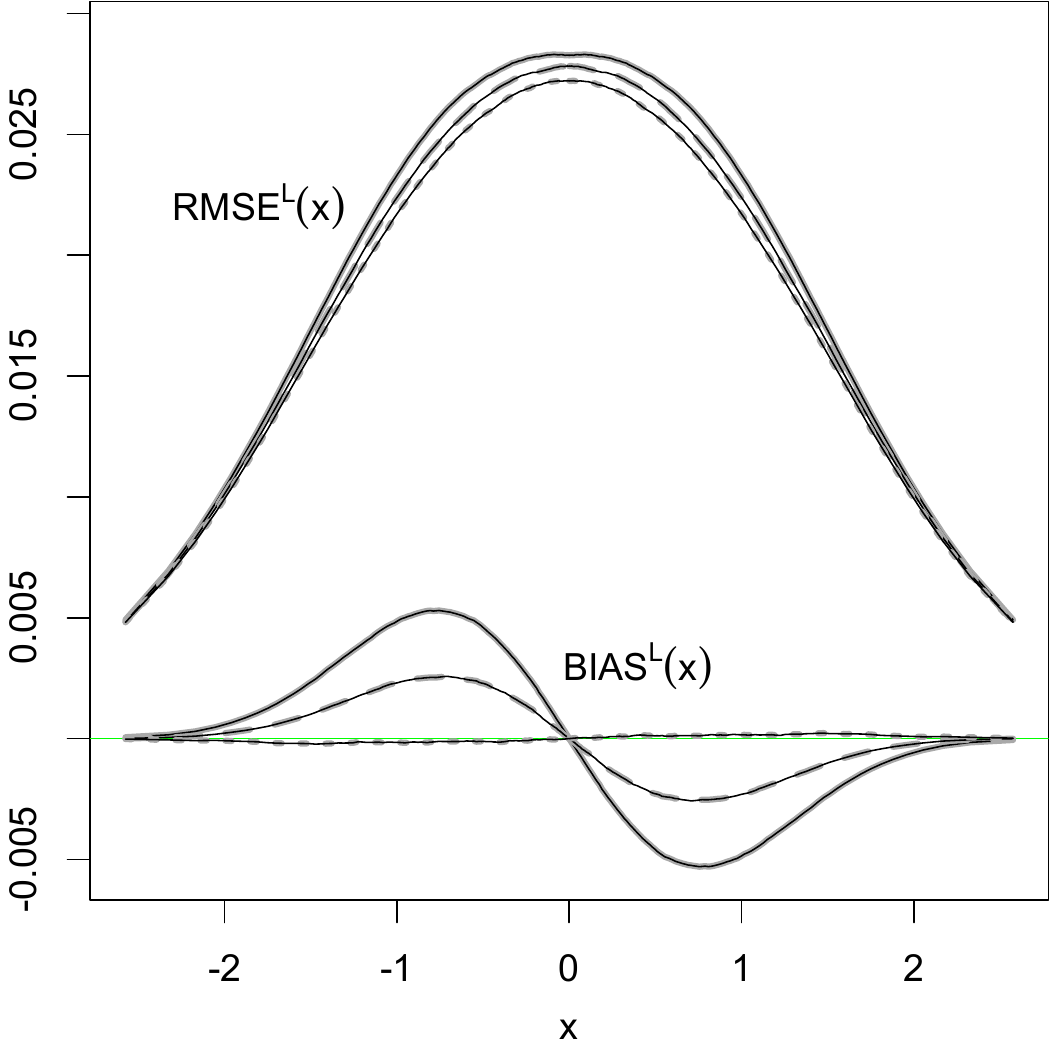}
\hfill
\includegraphics[width=0.49\textwidth]{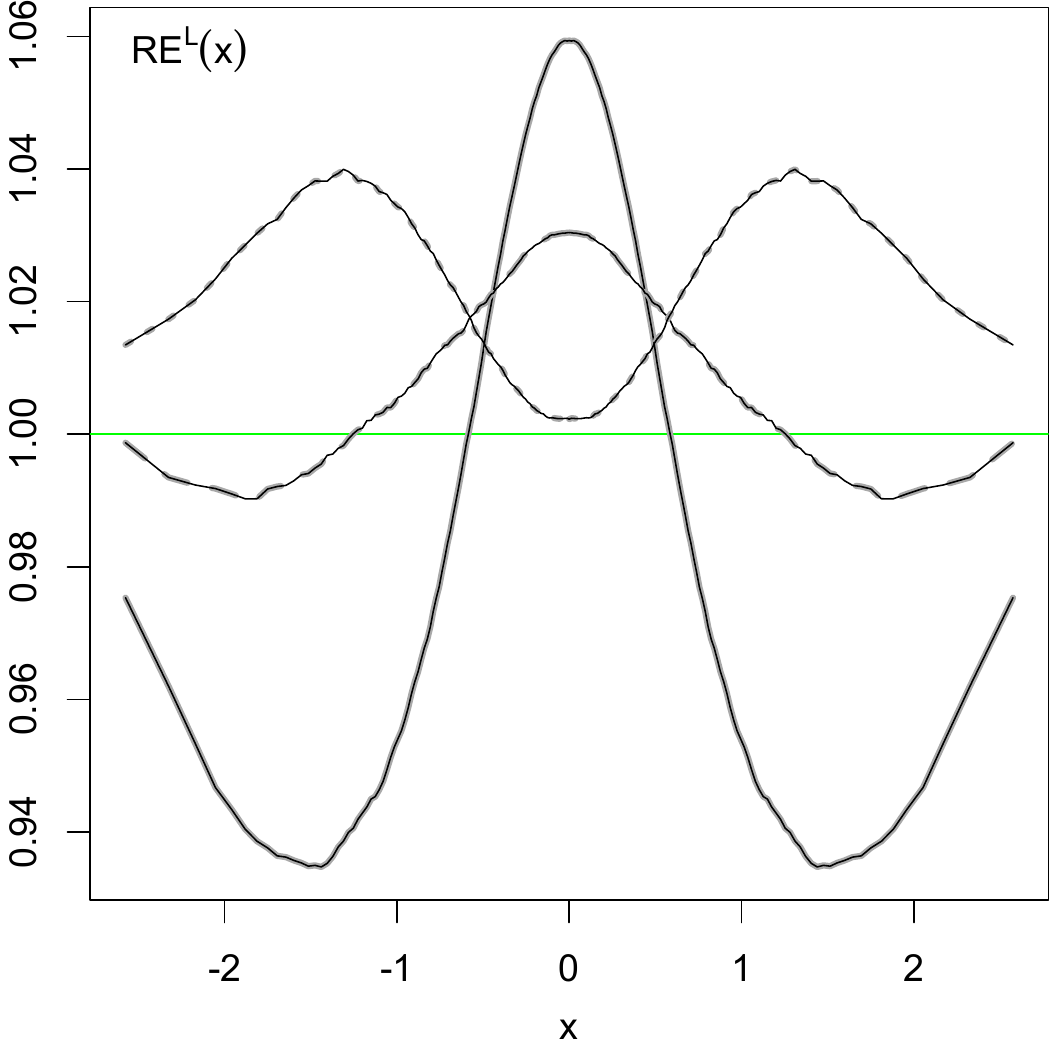}
\caption{Performance of $\FhatL$ in balanced setting with $N_{n1} = N_{n2} = N_{n3} = 70$: Bias and root mean squared error (left panel) and relative efficiency versus $\FhatS$ (right panel) for correlations $\rho = 1$ (dotted), $\rho = 0.95$ (dashed) and $\rho = 0.9$ (solid).}
\label{fig:Dell.Clutter.1}
\end{figure}

Secondly we considered the unbalanced situation with $\Nn = (100, 70, 40)$. Now the three estimators $\FhatZ$ are different, and only $\FhatS$ is unbiased. In Figure~\ref{fig:Dell.Clutter.2} we show bias and root mean squared errors of $\FhatM$ and $\FhatL$. Clearly the bias of of $\FhatM$ and $\FhatL$ gets worse as $\rho$ decreases, where $\FhatM$ is a bit more robust than $\FhatL$. Nevertheless the plots of the relative efficiencies $\mathrm{RE}^{\rm M}$ and $\mathrm{RE}^{\rm L}$ show that for $\rho = 0.95$ the moment-matching estimator outperforms the stratified one everywhere, and also the likelihood estimator is better at most places. For $\rho = 0.9$, the likelihood estimator is less favorable than the other two.

\begin{figure}
\includegraphics[width=0.49\textwidth]{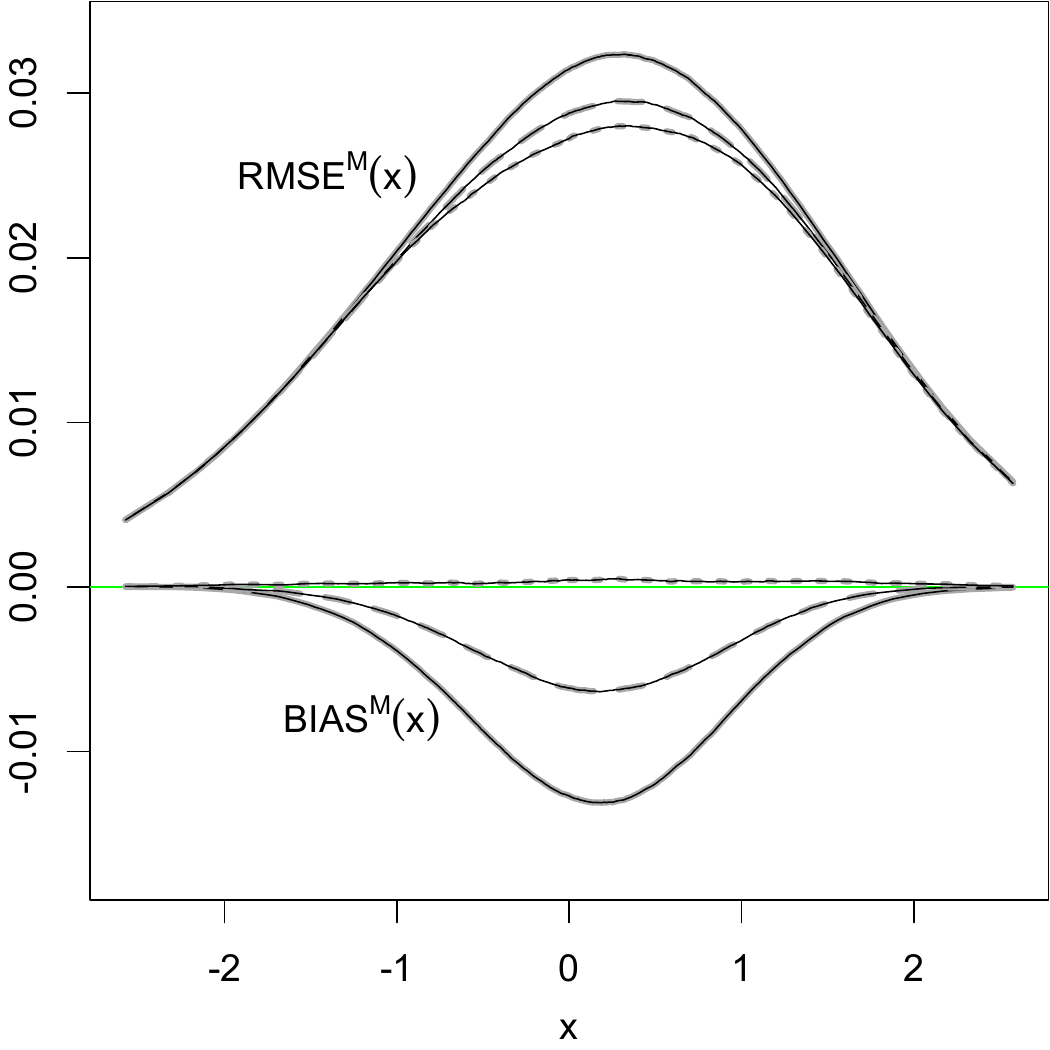}
\hfill
\includegraphics[width=0.49\textwidth]{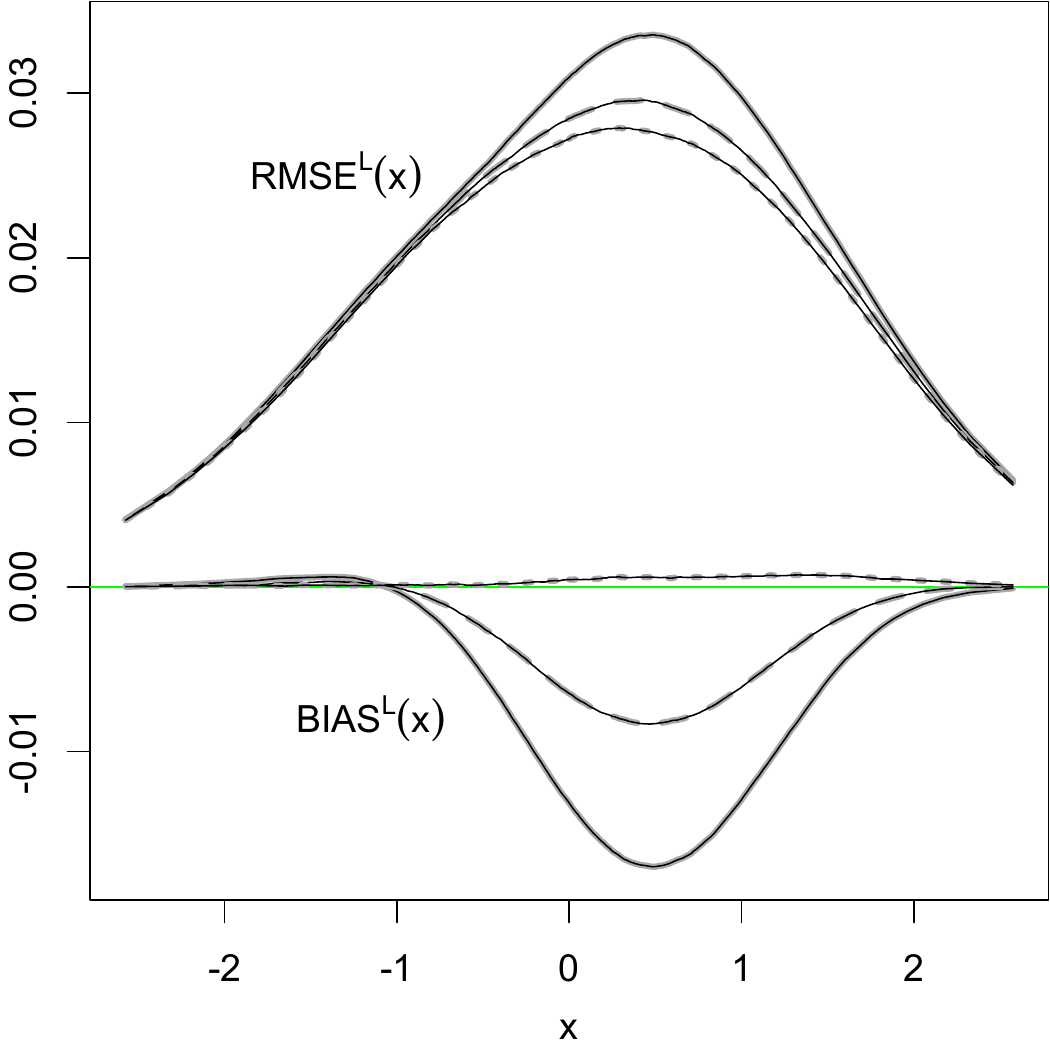}

\includegraphics[width=0.49\textwidth]{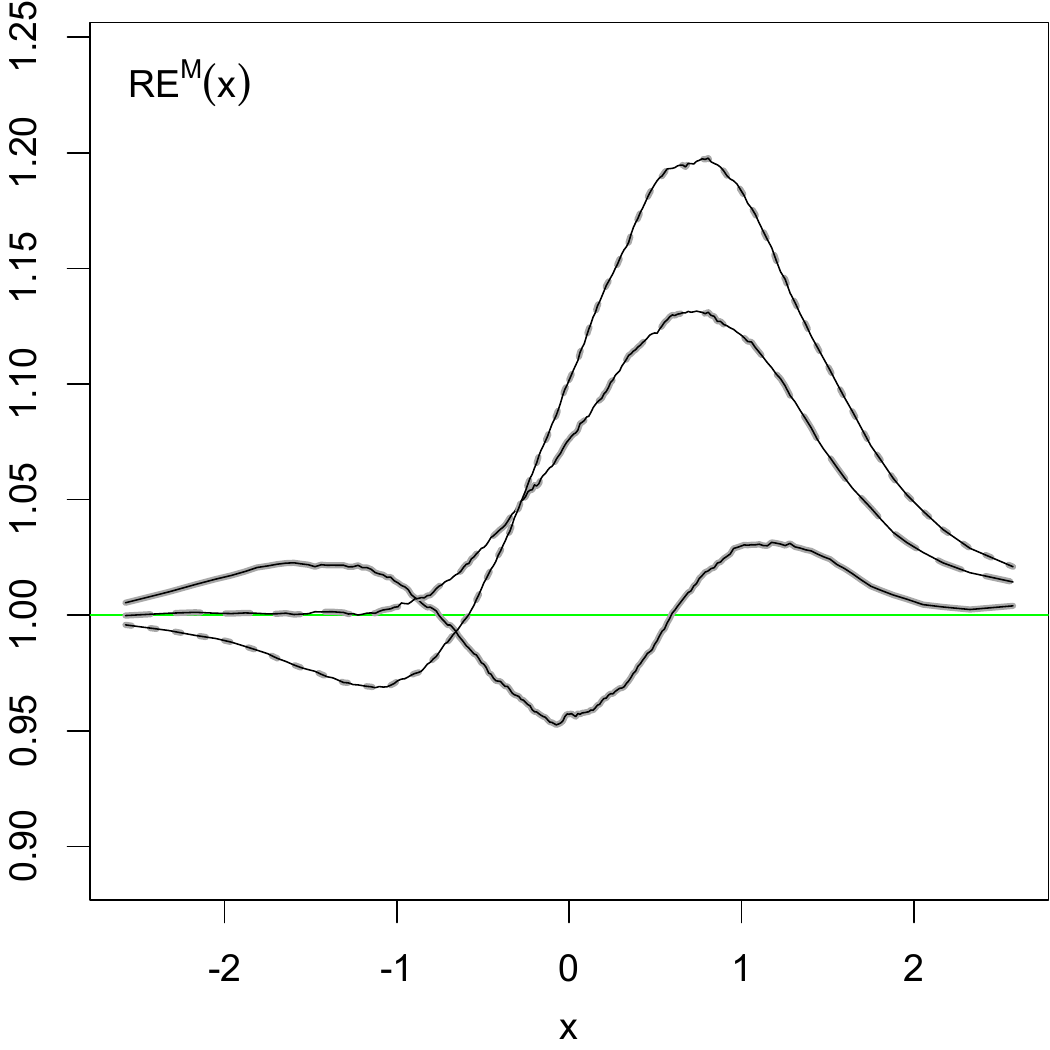}
\hfill
\includegraphics[width=0.49\textwidth]{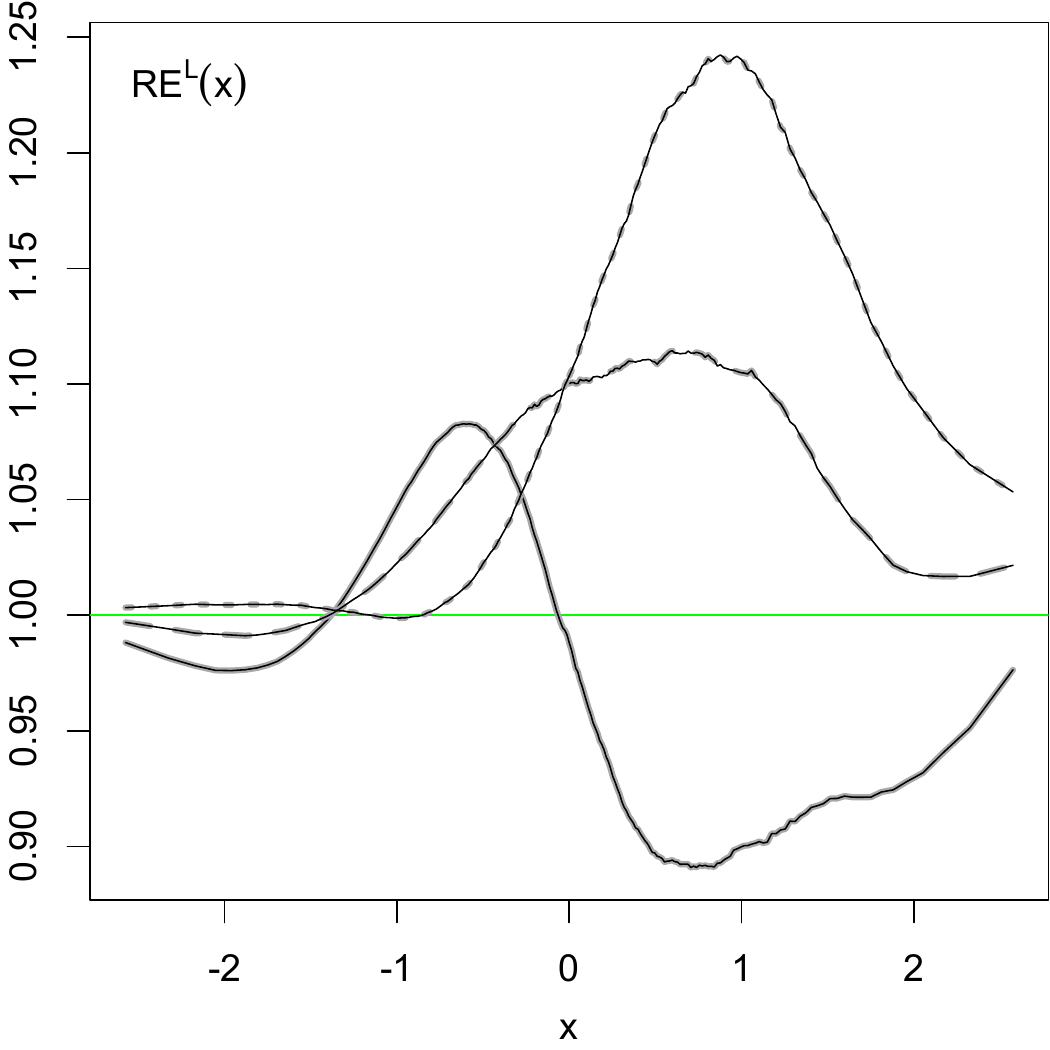}

\caption{Performance of $\FhatM$ and $\FhatL$ in unbalanced setting with $\Nn = (100,70,40)$: Biases and root mean squared errors (upper panels) and relative efficiencies versus $\FhatS$ (lower panels) for correlations $\rho = 1$ (dotted), $\rho = 0.95$ (dashed) and $\rho = 0.9$ (solid).}
\label{fig:Dell.Clutter.2}
\end{figure}

\section*{Conclusions and future research}

The present paper confirms and generalizes previous findings that the estimator $\FhatL$ is the most efficient one in case of perfect ranking, both in balanced and unbalanced situations. In terms of computational efficiency, however, the estimator $\FhatM$ has clear advantages and is particularly convenient as an ingredient for simultaneous confidence bands. Further it is closely related to pointwise confidence bands for $F$. For now we restricted ourselves to Kolmogorov--Smirnov type bands, but other variants might be worthwhile to study.

The simulations in Section~\ref{subsec:imperfect.rankings} indicate that even in case of imperfect rankings, both $\FhatM$ and $\FhatL$ perform well compared to $\FhatS$, as long as the ranking precision is high. While $\FhatL$ appears to be most sensitive to imperfect rankings, $\FhatM$ seems to offer a good compromise in terms of efficiency (for perfect rankings) and robustness against ranking errors. Investigating and understanding these differences thoroughly would be an interesting topic for future research.

\section*{Appendix}

We first recall two well-known facts about uniform empirical processes, see \nocite{Shorack_Wellner_1986}{Shorack and Wellner (1986)}.

\begin{Proposition}
\label{prop:WeightedUniformEP}
Let $U_1, U_2, U_3, \ldots$ be independent random variables with uniform distribution on $[0,1]$. For $N \in \mathbb{N}$ and $u \in [0,1]$ define
\[
	\V^{(N)}(u) \ := \ N^{-1/2} \sum_{i=1}^N \bigl( 1 \{U_i \le u\} - u) .
\]
Then, as $N \to \infty$, $\V ^{(N)}$ converges in distribution in $\ell_\infty([0,1])$ to a standard Brownian bridge $\V$ on $[0,1]$. Moreover, for any fixed $\delta \in [0,1/2)$ and $\epsilon > 0$,
\begin{align*}
	\sup_{N \ge 1} \Pr \biggl(
		\sup_{u \in (0,1)}
			\frac{|\V^{(N)}(u)|}{u^\delta (1 - u)^\delta} \ge C \biggr)
	\ &\to \ 0 \quad\text{as} \ C \uparrow \infty , \\
	\sup_{N \ge 1} \Pr \Bigl(
		\sup_{u \in (0,c] \cup [1-c,1)}
			\frac{|\V^{(N)}(u)|}{u^\delta (1 - u)^\delta} \ge \epsilon \Bigr)
	\ &\to \ 0 \quad\text{as} \ c \downarrow 0 .
\end{align*}
\end{Proposition}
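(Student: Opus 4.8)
The plan is to treat the two assertions separately, since both are classical facts whose standard proofs I would reproduce. For the convergence in distribution I would follow the usual two-step route for empirical processes: first establish convergence of the finite-dimensional marginals, then prove asymptotic tightness in $\ell_\infty([0,1])$. For fixed $0 \le u_1 < \cdots < u_m \le 1$ the random vector $(\V^{(N)}(u_1), \ldots, \V^{(N)}(u_m))$ is a normalized sum of $N$ independent centered vectors $(1\{U_i \le u_j\} - u_j)_{j=1}^m$, so the multivariate central limit theorem gives convergence to a centered Gaussian vector. A direct computation of $\Ex(1\{U \le s\} - s)(1\{U \le t\} - t) = \min\{s,t\} - st$ identifies the limiting covariance as that of a standard Brownian bridge $\V$, which handles the marginals.

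For tightness I would derive an increment bound: for $u \le v$ one has $\Ex(\V^{(N)}(v) - \V^{(N)}(u))^2 = (v-u)(1 - (v-u)) \le v - u$ uniformly in $N$, together with a Bernstein-type exponential tail bound for the binomial increment $\#\{i : u < U_i \le v\}$. Combining such increment bounds with a chaining argument yields asymptotic equicontinuity of $\V^{(N)}$ with respect to the intrinsic metric, hence asymptotic tightness; together with the marginal convergence this gives $\V^{(N)} \to_d \V$ in $\ell_\infty([0,1])$.

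For the two weighted estimates, the symmetry $u \leftrightarrow 1 - u$ reduces matters to the left endpoint, i.e.\ to controlling $\sup_{u \in (0,c]} |\V^{(N)}(u)|/u^\delta$. The natural device is a dyadic decomposition of $(0,c]$ into blocks $I_j = (2^{-(j+1)}, 2^{-j}]$ (truncated at $c$). On $I_j$ the weight $u^{-\delta}$ is of order $2^{j\delta}$, while the standard deviation of $\V^{(N)}$ is of order $2^{-j/2}$; applying a maximal inequality on each block---either Doob's inequality for the submartingale $u \mapsto \#\{i : U_i \le u\}$ or an exponential inequality for the block count---bounds the weighted supremum on $I_j$ by a term of order $2^{-j(1/2-\delta)}$ times a fluctuation factor, with all constants independent of $N$ because the per-observation variance $u(1-u)$ does not depend on $N$. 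Since $\delta < 1/2$, these bounds are summable in $j$; summing over $j \ge \log_2(1/c)$ and over both endpoints yields simultaneously the uniform stochastic boundedness (the first display) and the vanishing of the tail contribution as $c \downarrow 0$ (the second display).

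The main obstacle is precisely this uniform-in-$N$ control of the weighted process near the endpoints, where the diverging weight $u^{-\delta}$ must be balanced against the shrinking variance by a tail bound that does not deteriorate as $N \to \infty$. The delicate regime is $u$ of order $1/N$ or smaller, where the Gaussian approximation is invalid and one must invoke the binomial (or Poisson) tail directly; there, however, $u^{-\delta} \le C N^\delta$ while $\V^{(N)}(u)$ is at most $N^{-1/2}$ times a count of order one, so the extreme blocks are treated separately and contribute negligibly. Assembling the block estimates over all scales, exactly as in the weighted empirical process theory of Shorack and Wellner (1986), completes the argument.
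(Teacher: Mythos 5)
The paper never proves this proposition: it is stated as a recollection of ``two well-known facts about uniform empirical processes'' with a citation to Shorack and Wellner (1986), and Section~\ref{sec:proofs} moves directly on to Lemma~\ref{lem:Calculus}. So there is no in-paper argument to match; what you have done is reconstruct the classical proof that the citation points to, and your reconstruction is essentially correct. The Donsker part (multivariate CLT for the finite-dimensional marginals, covariance $\min\{s,t\}-st$, Bernstein-type increment bounds plus chaining for asymptotic tightness) is the standard route. For the weighted statements, your dyadic blocking of $(0,c]$ with a per-block maximal inequality, giving block contributions of order $2^{-j(1-2\delta)}$ and summability precisely because $\delta<1/2$, is exactly how the Chibisov--O'Reilly-type bounds in Shorack and Wellner (Chapter 11) are obtained; a union bound over blocks $j\ge\log_2(1/c)$ yields a bound of order $c^{1-2\delta}/\epsilon^2$ uniformly in $N$, which delivers both displays at once, and your separate treatment of the regime $u\lesssim 1/N$ is the right way to handle the failure of the Gaussian approximation there. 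The trade-off is clear: the paper's citation is economical and appropriate for a known result, while your sketch makes the paper self-contained at the cost of a page of classical empirical-process theory.

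One tool you name is wrong, though the architecture survives: Doob's inequality applied to the submartingale $u \mapsto \#\{i : U_i \le u\}$ controls only the uncentered count (a Markov-type bound on $\Ex\,\#\{i:U_i\le u\}$) and says nothing useful about the centered fluctuation $\V^{(N)}$. The correct martingale objects are $u \mapsto \V^{(N)}(u)/(1-u)$ (a forward martingale) and $u \mapsto \V^{(N)}(u)/u$ (a reverse martingale), to which Doob's $L^2$ inequality or the Birnbaum--Marshall inequality applies and which produce exactly the $2^{-j(1-2\delta)}$ block bounds you need; alternatively, the exponential inequality for the block counts that you offer as a second option also works. With that substitution your outline is a complete and correct proof plan.
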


For the estimators $\FhatM$, $\FhatL$ we need some basic facts and inequalities for the auxiliary functions $w_k$ and $B_k$ which are proved in the supplement:

\begin{Lemma}
\label{lem:calculus}
\textbf{(a)} \ For $r = 1,2,\ldots,k$, the function $w_r$ on $(0,1)$ may be written as $w_r(t) = \tilde{w}_r(t) / (t(1-t))$ with $\tilde{w}_r : [0,1] \to (0,\infty)$ continuously differentiable. Moreover, for $r = 1,2,\ldots,k$ and $t \in (0,1)$,
\[
	1 \ \le \ \tilde{w}_r(t) \ \le \ \max(r,k+1-r) .
\]

\noindent
\textbf{(b)} \ For any constant $c \in (0,1)$ there exists a number $c' = c'(k,c) > 0$ with the following property: If $t,p \in (0,1)$ such that
\[
	\frac{|p - t|}{t(1-t)} \ \le \ c ,
\]
then for $r = 1,2,\ldots,k$,
\[
	\max \Bigl\{ \Bigl| \frac{w_r(p)}{w_r(t)} - 1 \Bigr| ,
		\Bigl| \frac{B_r(p) - B_r(t)}{\beta_r(t) (p - t)} - 1 \Bigr| \Bigr\}
	\ \le \ c' \frac{|p - t|}{t(1-t)} .
\]
\end{Lemma}

\begin{proof}[\bf Proof of Theorem~\ref{thm:Asymptotics1}]
We start with the weight functions $\gamma_{nr}^{\rm Z}$: Note that by Lemma~\ref{lem:calculus},
\begin{align*}
	\gamma_{nr}^{\rm S}(t) \
	&= \ \frac{1}{k \sqrt{\pi_{nr}}} , \\
	\gamma_{nr}^{\rm M}(t) \
	&= \ \sqrt{\pi_{nr}} \bigr/ \sum_{s=1}^k \pi_{ns} \beta_s(t) , \\
	\gamma_{nr}^{\rm L}(t) \
	&= \sqrt{\pi_{nr}}\, \tilde{w}_r(t)
		\Big/ \sum_{s=1}^k \pi_{ns} \tilde{w}_s(t) \beta_s(t)
\end{align*}
with the probability weights $\pi_{nr} := N_{nr}/n$ and continuous functions $\tilde{w}_r : [0,1] \to [1,k]$. Since the beta densities $\beta_r$ are also continuous with $\beta_1(0) = \beta_k(1) = k$, this shows that $\gamma_{nr}^{\rm Z}$ is well-defined and continuous, provided that its denominator is strictly positive, i.e.
\[
	\begin{cases}
		\pi_{n1}, \ldots, \pi_{nk} > 0
		& \text{if} \ {\rm Z} = {\rm S} , \\
		\pi_{n1}, \pi_{nk} > 0
		& \text{if} \ {\rm Z} = {\rm M}, {\rm L} .
	\end{cases}
\]
For sufficiently large $n$ this is the case, because $\lim_{n\to\infty} \pi_{nr} = \pi_r$ for all $r$. The functions $\gamma_r^{\rm Z}$ in Corollary~\ref{cor:Asymptotics} are continuous, too, and elementary considerations reveal that
\begin{equation}
\label{eq:Asymptotics gamma}
	\max_{t \in [0,1], \, 1 \le r \le k}
		\bigl| \gamma_{nr}^{\rm Z}(t) - \gamma_r^{\rm Z}(t) \bigr|
	\ \to \ 0
\end{equation}
as $n \to \infty$. In particular, $\max_{t \in [0,1], 1 \le r \le k} \gamma_{nr}^{\rm Z}(t) = O(1)$.

Note that for $n \ge 1$ and $1 \le r \le k$, the empirical process $\V_{nr}$ is distributed as $\V^{(N_{nr})}$ in Proposition~\ref{prop:WeightedUniformEP}. Note also that the distribution functions $B_r$ satisfy $B_1 \ge B_2 \ge \cdots \ge B_k$, because for $1 \le r < k$ the density ratio $\beta_{r+1}/\beta_r$ is a positive multiple of $t/(1 - t)$ and thus strictly increasing. Consequently, for $1 \le r \le k$,
\[
	B_r(t) \ \le \ B_1(t) \ \le \ kt
	\quad\text{and}\quad
	1 - B_r(t) \ \le \ 1 - B_k(t) \ \le \ k(1-t) ,
\]
so
\[
	\frac{B_r(t)(1 - B_r(t))}{t(1-t)} \ \le \ k .
\]
Consequently,
\begin{align*}
	\sup_{t \in (0,1)}
		\frac{|\V_{nr}(B_r(t))|}{t^\delta (1 - t)^\delta} \
	&\le \ k^\delta \sup_{u \in (0,1)}
		\frac{|\V_{nr}(u)|}{u^\delta(1 - u)^\delta}
		\ = \ O_p(1) \quad\text{and} \\
	\sup_{u \in (0,c] \cup [1-c,1)}
		\frac{|\V_{nr}(B_r(t))|}{t^\delta (1 - t)^\delta} \
	&\le \ k^\delta \sup_{u \in (0,kc] \cup [1 - kc,1)}
		\frac{|\V_{nr}(u)|}{u^\delta (1 - u)^\delta}
		\ \to_p \ 0
\end{align*}
as $n \to \infty$ and $c \downarrow 0$. All in all we may conclude that
\begin{align}
\label{eq:AsVn1}
	\sup_{t \in (0,1)} \,
		\frac{|\V_n^{\rm Z}(t)|}{t^\delta (1 - t)^\delta}
	\ &= \ O_p(1) , \\
\label{eq:AsVn2}
	\sup_{t \in (0,c] \cup [1-c,1)} \,
		\frac{|\V_n^{\rm Z}(t)|}{t^\delta (1 - t)^\delta}
	\ &\to_p \ 0
	\quad\text{as} \ n \to \infty \ \text{and}Ê\ c \downarrow 0 .
\end{align}

It remains to be shown that the process $\sqrt{n} (\hat{B}_n^{\rm Z} - B)$ may be approximated by $\V_n^{\rm Z}$. In case of ${\rm Z} = {\rm S}$ it follows from $\sum_{r=1}^k \beta_r \equiv k$ that $\sum_{r=1}^k B_r = k B$, and this implies that
\[
	\sqrt{n} (\BhatS - B)
	\ = \ \sum_{r=1}^k \frac{\sqrt{n} (\hat{B}_{nr} - B_r)}{k}
	\ = \ \sum_{r=1}^k \gamma_{nr}^{\rm S} \, \V_{nr} \circ B_r
	\ = \ \V_n^{\rm S} .
\]

For ${\rm Z} = {\rm M}, {\rm L}$ it suffices to show that for any fixed number $b \ne 0$ and
\[
	p_n^{\rm Z}(t)
	\ := \ t
		+ \frac{\V_n^{\rm Z}(t) + b t^\delta(1-t)^\delta}{\sqrt{n}}
\]
the following statements are true: If $b < 0$, then with asymptotic probability one,
\begin{equation}
\label{eq:b<0}
	\left.\begin{array}{c}
		\displaystyle
		\inf_{t \in (0,1)} \Bigl(
			n \Bhat(t) - \sum_{r=1}^k N_{nr} B_r(p_n^{\rm M}(t)) \Bigr) \\[3ex]
		\displaystyle
		\inf_{t \in (0,1)} \, L_n'(t,p_n^{\rm L}(t))
	\end{array}\right\}
	\ \ge \ 0 .
\end{equation}
If $b > 0$, then with asympototic probability one,
\begin{equation}
\label{eq:b>0}
	\left.\begin{array}{c}
		\displaystyle
		\sup_{t \in (0,1)} \Bigl(
			n \Bhat(t) - \sum_{r=1}^k N_{nr} B_r(p_n^{\rm M}(t)) \Bigr) \\[3ex]
		\displaystyle
		\sup_{t \in (0,1)} \, L_n'(t,p_n^{\rm L}(t))
	\end{array}\right\}
	\ \le \ 0 .
\end{equation}
Here we use the conventions that $L_n'(t,\cdot) := \infty$ and $B_r := 0$ on $(-\infty,0]$ while $L_n'(t,\cdot) := -\infty$ and $B_r := 1$ on $[1,\infty)$.

To verify these claims, we split the interval $(0,1)$ into $(0,c_n]$, $[c_n,1-c_n]$ and $[1-c_n,1)$ with numbers $c_n \in (0,1/2)$ to be specified later, where $c_n \downarrow 0$.

On $[c_n,1-c_n]$ we utilize Lemma~\ref{lem:calculus}: For $t \in [c_n, 1 - t_n]$ and $p \in (0,1)$ such that $|p - t| \le t(1-t)/2$ we may write
\begin{align*}
	n \Bhat(t) \ - & \sum_{r=1}^k N_{nr} B_r(p) \\
	&= \ \sum_{r=1}^k \sqrt{N_{nr}} \V_{nr}(B_r(t))
		- \sum_{r=1}^k N_{nr}(B_r(p) - B_r(t)) \\
	&= \ \sum_{r=1}^k \sqrt{N_{nr}} \V_{nr}(B_r(t))
		- \sum_{r=1}^m N_{nr} \beta_r(t) (p - t)
		+ \rho_n^{\rm M}(t,p) \\
	&= \ \sum_{r=1}^k N_{nr} \beta_r(t)
			\Bigl( \frac{\V_n^{\rm M}(t)}{\sqrt{n}} - (p - t) \Bigr)
		+ \rho_n^{\rm M}(t,p)
\intertext{and}
	L_n'(t,p) \
	&= \ \sum_{r=1}^k \sqrt{N_{nr}} w_r(p) \V_{nr}(B_r(t))
		- \sum_{r=1}^k N_{nr} w_r(p) (B_r(p) - B_r(t)) \\
	&= \ \sum_{r=1}^k \sqrt{N_{nr}} w_r(t) \V_{nr}(B_r(t))
		- \sum_{r=1}^k N_{nr} w_r(t) \beta_r(t) (p - t)
		+ \rho_n^{\rm L}(t,p) \\
	&= \ \sum_{r=1}^k N_{nr} w_r(t) \beta_r(t)
			\Bigl( \frac{\V_n^{\rm L}(t)}{\sqrt{n}} - (p - t) \Bigr)
		+ \rho_n^{\rm L}(t,p) ,
\end{align*}
where
\begin{align*}
	|\rho_n^{\rm M}(t,p)| \
	&\le \ \frac{O(n) |p - t|^2}{t(1-t)} , \\
	|\rho_n^{\rm L}(t,p)| \
	&\le \ \frac{O_p(\sqrt{n}) t^{\delta} (1 - t)^{\delta} |p - t|}{t(1-t)}
		+ \frac{O(n) |p - t|^2}{t^2(1-t)^2} .
\end{align*}
Note that for $t \in [c_n,1-c_n]$,
\[
	\frac{\bigl| p_n^{\rm Z}(t) - t \bigr|}{t(1-t)}
	\ \le \ \frac{O_p(1) t^\delta(1 - t)^{\delta}}{\sqrt{n} \, t(1-t)}
	\ \le \ \frac{O_p(1)}{\sqrt{n} \, c_n^{1-\delta}} .
\]
Hence we choose $c_n$ such that $c_n \downarrow 0$ but $n c_n^{2(1-\delta)} \to \infty$. With this choice we may conclude that uniformly in $t \in [c_n,1-c_n]$,
\begin{align*}
	\bigl| \rho_n^{\rm M}(t,p_n^{\rm M}(t)) \bigr| \
	&\le \ O_p(c_n^{\delta-1}) t^\delta(1-t)^\delta , \\
	\bigl| \rho_n^{\rm L}(t,p_n^{\rm L}(t)) \bigr| \
	&\le \ O_p(c_n^{\delta-1}) t^{\delta-1}(1-t)^{\delta-1} .
\end{align*}
On the other hand, since $\beta_1(t) + \beta_k(t) \ge \beta_1(1/2) + \beta_k(1/2) = k 2^{2-k}$,
\begin{align*}
	\sum_{r=1}^k N_{nr} \beta_r(t) \
	&\ge \ k 2^{2 - k} \min\{N_{n1},N_{nk}\} , \\
	\sum_{r=1}^k N_{nr} w_r(t) \beta_r(t) \
	&\ge \ \frac{k 2^{2-k} c_w}{t(1-t)} \, \min\{N_{n1},N_{nk}\} .
\end{align*}
Consequently,
\begin{align*}
	n \Bhat(t) \ - & \sum_{r=1}^k N_{nr} B_r(p_n^{\rm M}(t)) \\
	&= \ \sum_{r=1}^k N_{nr} \beta_r(t)
		\frac{- b t^\delta(1-t)^\delta}{\sqrt{n}}
		+ \rho_n^{\rm M}(t,p_n^{\rm M}(t)) \\
	&= \ \sum_{r=1}^m N_{nr} \beta_r(t) \frac{t^\delta(1-t)^\delta}{\sqrt{n}}
		\Bigl( - b + O_p(c_n^{\delta-1} n^{-1/2}) \kappa_n^{\rm M}(t) \Bigr) \\
\intertext{and}
	L_n'(t,p_n^{\rm L}(t)) \
	&= \ \sum_{r=1}^k N_{nr} w_r(t) \beta_r(t)
		\frac{-b t^\delta(1-t)^\delta}{\sqrt{n}}
		+ \rho_n^{\rm L}(t,p_n^{\rm L}(t)) \\
	&= \ \sum_{r=1}^k N_{nr} w_r(t) \beta_r(t) \frac{t^\delta(1-t)^\delta}{\sqrt{n}}
		\Bigl( - b + O_p(c_n^{\delta-1} n^{-1/2}) \kappa_n^{\rm L}(t) \Bigr)
\end{align*}
for some random functions $\kappa_n^{\rm M}, \kappa_n^{\rm L} : [c_n,1-c_n] \to [-1,1]$. These considerations show that \eqref{eq:b<0} and \eqref{eq:b>0} are satisfied with $[c_n,1-c_n]$ in place of $(0,1)$.

It remains to verify \eqref{eq:b<0} and \eqref{eq:b>0} with $(0,c_n]$ in place of $(0,1)$; the interval $[1-c_n,1)$ may be treated analogously. Note first that for $2 \le r \le k$,
\[
	B_r(t) \ \le \ B_2(t) \ \le \ k(k-1) t^2/2
	\quad\text{and}\quad
	\beta_r(t) \ \le \ k 2^{k-1} t ,
\]
so
\[
	\bigl| B_r(p) - B_r(t) \bigr|
	\ = \ \Bigl| \int_t^p \beta_r(u) \, du \Bigr|
	\ \le \ O(\max(p,t)) (p - t) .
\]
Futhermore, since $B_1(t) = 1 - (1 - t)^k$,
\[
	B_1(p) - B_1(t) \ = \ k (p - t) + O(\max(t,p)) (p - t) .
\]
Hence for $t \in (0,c_n]$ and $p \in (0,2c_n]$,
\begin{align*}
	n \Bhat(t) \ - & \sum_{r=1}^k N_{nr} B_r(p) \\
	&= \ \sum_{r=1}^k \sqrt{N_{nr}} \V_{nr}(B_r(t))
		- \sum_{r=1}^k N_{nr}(B_r(p) - B_r(t)) \\
	&= \ - N_{n1} k (p - t) + \rho_n^{\rm M}(t,p)
\intertext{and}
	L_n'(t,p) \
	&= \ \sum_{r=1}^k \sqrt{N_{nr}} w_r(p) \V_{nr}(B_r(t))
		- \sum_{r=1}^k N_{nr} w_r(p) (B_r(p) - B_r(t)) \\
	&= \ - N_{n1} w_1(p) k (p - t)
		+ \rho_n^{\rm L}(t,p) ,
\end{align*}
where
\begin{align*}
	|\rho_n^{\rm M}(t,p)| \
	&\le \ o_p(\sqrt{n}) t^\delta + O(n c_n) (p - t) , \\
	|\rho_n^{\rm L}(t,p)| \
	&\le \ o_p(\sqrt{n}) p^{-1} t^\delta + O(n c_n) p^{-1} (p - t) .
\end{align*}

Note also that
\[
	\sup_{t \in (0,c_n]}
		\Bigl| \frac{\sqrt{n}(p_n^{\rm Z}(t) - t)}{t^\delta(1-t)^\delta}
			- b \Bigr|
	\ \to_p \ 0 .
\]
In particular, $\sup_{t \in (0,c_n]} p_n^{\rm Z}(t) = c_n + o_p(n^{-1/2} c_n^\delta) = c_n (1 + o_p(1))$, and in case of $b > 0$, $\Pr \bigl( p_n^{\rm Z}(t) > 0 \ \text{for} \ 0 < t \le c_n \bigr) \to 1$.

In case of $b > 0$, these considerations show that for $0 < t \le c_n$,
\begin{align*}
	n \Bhat(t) \ - & \sum_{r=1}^k N_{nr} B_r(p_n^{\rm M}(t)) \\
	&= \ - N_{n1} k (p_n^{\rm M}(t) - t) + \rho_n^{\rm M}(t,p_n^{\rm M}(t)) \\
	&\le \ \frac{N_{n1} k t^\delta(1-t)^\delta}{\sqrt{n}}
			\bigl( -b + o_p(1) \bigr)
		+ o_p(\sqrt{n}) t^\delta + O(\sqrt{n} c_n) t^\delta \\
	&\le \ \frac{N_{n1} k t^\delta(1-t)^\delta}{\sqrt{n}}
		\bigl( - b + o_p(1) \bigr)
\intertext{and}
	L_n'(t,p_n^{\rm L}(t)) \
	&= \ - N_{n1} w_1(p) k (p_n^{\rm L}(t) - t)
		+ \rho_n^{\rm L}(t,p_n^{\rm Z}(t)) \\
	&\le \ \frac{N_{n1} w_1(p) k t^\delta(1-t)^\delta}{\sqrt{n}}
			\bigl( - b + o_p(1) \bigr)
		+ o_p(\sqrt{n}) p^{-1} t^\delta + O(\sqrt{n} c_n) p^{-1} t^\delta \\
	&\le \ \frac{N_{n1} w_1(p) k t^\delta(1-t)^\delta}{\sqrt{n}}
			\bigl( - b + o_p(1) \bigr) .
\end{align*}
Analogously, in case of $b < 0$, for any $t \in (0,c_n]$ we obtain the inequalities
\begin{align*}
	n \Bhat(t) - \sum_{r=1}^k N_{nr} B_r(p_n^{\rm M}(t)) \
	&\ge \ \begin{cases}
		\displaystyle
		\frac{N_{n1} k t^\delta(1-t)^\delta}{\sqrt{n}}
			\bigl( - b + o_p(1) \bigr)
		& \text{if} \ p_n^{\rm M}(t) > 0 , \\
		0
		& \text{if} \ p_n^{\rm M}(t) \le 0 ,
		\end{cases} \\
	L_n'(t,p_n^{\rm L}(t)) \
	&\ge \ \begin{cases}
		\displaystyle
		\frac{N_{n1} w_1(p) k t^\delta(1-t)^\delta}{\sqrt{n}}
			\bigl( - b + o_p(1) \bigr)
		& \text{if} \ p_n^{\rm L}(t) > 0 , \\
		\infty
		& \text{if} \ p_n^{\rm L}(t) \le 0 .
		\end{cases}
\end{align*}
Hence \eqref{eq:b<0} and \eqref{eq:b>0} are satisfied with $(0,c_n]$ in place of $(0,1)$.
\end{proof}

\begin{proof}[\bf Proof of Theorem~\ref{thm:Asymptotics1B}]
For symmetry reasons it suffices to prove the first part about the left tails. Let $(c_n)_n$ be a sequence of numbers in $(0,1/2]$ converging to zero. Then for $t \in (0,c_n]$ and $\delta := \kappa/2 \in (0,1/2)$,
\[
	\bigl| \sqrt{n} \bigl( \BhatS(t) - t \bigr) - \V_n^{(\ell)}(t) \bigr|
	\ = \ \Bigl| \sum_{r=2}^k \frac{\V_{nr}(B_r(t))}{k \sqrt{N_{nr}/n}} \Bigr|
	\ \le \ t^{2 \delta} o_p(1)
	\ = \ t^\kappa o_p(1) .
\]
Concerning $\BhatM$ and $\BhatL$, for any $t \in (0,c_n]$ and $p \in (0,1)$,
\begin{align*}
	n \Bhat(t) \ - & \sum_{r=1}^k N_{nr} B_r(p) \\
	&= \ \sum_{r=1}^k \sqrt{N_{nr}} \V_{nr}(B_r(t))
		- \sum_{r=1}^k N_{nr}(B_r(p) - B_r(t)) \\
	&= \ \sqrt{N_{n1}} \V_{n1}(B_1(t)) - N_{n1} k(p - t)
		+ \rho_n^{\rm M}(t,p) \\
	&= \ N_{n1} k
			\Bigl( \frac{\V_{n1}(B_1(t))}{k \sqrt{N_{n1}}} - (p - t) \Bigr)
		+ \rho_n^{\rm M}(t,p) \\
\intertext{and}
	L_n'(t,p) \
	&= \ \sum_{r=1}^k \sqrt{N_{nr}} w_r(p) \V_{nr}(B_r(t))
		- \sum_{r=1}^k N_{nr} w_r(p) (B_r(p) - B_r(t)) \\
	&= \ \sqrt{N_{n1}} w_1(p) \V_{n1}(B_1(t))
		- N_{n1} w_1(p) k (p - t)
		+ \rho_n^{\rm L}(t,p) \\
	&= \ N_{n1} k w_1(p)
			\Bigl( \frac{\V_{n1}(B_1(t))}{k \sqrt{N_{n1}}} - (p - t) \Bigr)
		+ \rho_n^{\rm L}(t,p) ,
\end{align*}
where
\begin{align*}
	|\rho_n^{\rm M}(t,p)| \
	&\le \ o_p(\sqrt{n}) t^{2\delta} + O(n) \max(t,p) (p - t) , \\
	|\rho_n^{\rm L}(t,p)| \
	&\le \ o_p(\sqrt{n}) p^{-1} t^{2\delta} + O(n) p^{-1} \max(t,p) (p - t) .
\end{align*}

Now we proceed similarly as in the proof of Theorem~\ref{thm:Asymptotics1}, defining
\[
	p_n(t) \ := \ t + \frac{\V_n^{(\ell)}(t) + b t^\kappa}{\sqrt{n}}
\]
for some fixed $b \ne 0$. Note that for $t \in (0,c_n]$,
\[
	|p_n(t) - t|
	\ \le \ o_p(n^{-1/2}) t^\delta + O(n^{-1/2}) t^\kappa
	\ = \ o_p(n^{-1/2}) t^\delta ,
\]
because $\kappa > \delta$. Note also that
\[
	t + \frac{\V_n^{(\ell)}(t)}{\sqrt{n}}
	\ = \ t + \frac{\V_{n1}(B_1(t))}{k \sqrt{N_{n1}}}
	\ = \ t - \frac{1 - (1-t)^k}{k} + \frac{\hat{B}_{n1}(t)}{k}
	\ > \ 0 \quad \text{on} \ (0,1) ,
\]
because $\hat{B}_{n1} \ge 0$ and $t \mapsto t - (1 - (1-t)^k)/k$ is strictly convex on $[0,1]$ with derivative $0$ at $0$. Thus $p_n(t) > 0$ for all $t \in (0,c_n]$ in case of $b > 0$.

In case of $b > 0$ we may conclude that
\begin{align*}
	n \Bhat(t) \ - & \sum_{r=1}^k N_{nr} B_r(p_n(t)) \\
	&= \ N_{n1} k \frac{-b t^\kappa}{\sqrt{n}} + \rho_n^{\rm M}(t,p_n(t)) \\
	&\le \ \frac{N_{n1} k}{\sqrt{n}}
		\bigl( -b t^\kappa + o_p(1) t^{2\delta}
			+ O(1) (t + o_p(n^{-1/2}) t^\delta) t^\delta \bigr) \\
	&\le \ \frac{N_{n1} k t^\kappa}{\sqrt{n}}
		\bigl( -b + o_p(1) \bigr) ,
\intertext{and}
	L_n'(t,p_n(t)) \
	&\le \ \frac{N_{n1} k w_1(p) t^\kappa}{\sqrt{n}}
		\bigl( -b + o_p(1) \bigr) .
\end{align*}
Hence for any fixed $b > 0$,
\[
	\Pr \bigl( \sqrt{n} (\BhatZ(t) - t) \le \V_n^{(\ell)}(t) + b t^\kappa
		\ \text{for} \ t \in (0,c_n] \bigr)
	\ \to \ 0 .
\]

Similarly we can show that for any fixed $b < 0$, with asymptotic probability one, $\sqrt{n} (\BhatZ(t) - t) \le \V_n^{(\ell)}(t) + b t^\kappa$ for all $t \in (0,c_n]$.
\end{proof}

\begin{proof}[\bf Proof of Corollary~\ref{cor:Asymptotics}]
It follows from Proposition~\ref{prop:WeightedUniformEP} that
\[
	\sup_{1 \le r \le k, \, u \in [0,1]} |\V_{nr}(u)|
	\ = \ O_p(1) .
\]
Together with \eqref{eq:Asymptotics gamma} this entails that $\sup_{t \in [0,1]} \bigl| \V_n^{\rm Z}(t) - \tilde{\V}_n^{\rm Z}(t) \bigr| \to_p 0$, where $\tilde{\V}_n^{\rm Z} := \sum_{r=1}^k \gamma_r^{\rm Z} \, \V_{nr} \circ B_r$. But $\gamma_r^{\rm Z} \equiv 0$ whenever $\pi_r = 0$. In case of $\pi_r > 0$ it follows from Proposition~\ref{prop:WeightedUniformEP} that $\V_{nr}$ converges in distribution to $\V_r$. Consequently $\tilde{\V}_n^{\rm Z}$ converges in distribution to the Gaussian process $\V^{\rm Z} = \sum_{r=1}^k \gamma_r^{\rm Z} \, \V_r \circ B_r$.
\end{proof}

\begin{proof}[\bf Proof of Theorem~\ref{thm:Asymptotics3}]
The asserted inequalities follow from Jensen's inequality. On the one hand, it follows from $w_r = \beta_r / (B_r(1 - B_r))$ and $\sum_{r=1}^k \beta_r \equiv k$ that
\begin{align*}
	K^{\rm S}(t) \
	&= \ \frac{1}{k} \sum_{r=1}^k \frac{\beta_r(t)}{k} \cdot (\pi_r w_r(t))^{-1} \\
	&\ge \ \frac{1}{k}
		\Bigl( \sum_{r=1}^k \frac{\beta_r(t)}{k} \cdot \pi_r w_r(t) \Bigr)^{-1} \\
	&= \ \Bigl( \sum_{r=1}^k \pi_r \beta_r(t) w_r(t) \Bigr)^{-1}
		\ = \ K^{\rm L}(t) .
\end{align*}
Equality holds if, and only if,
\[
	\pi_1 w_1(t) = \pi_2 w_2(t) = \cdots = \pi_k w_k(t) .
\]
But
\[
	w_1(t) \ = \ \frac{k}{(1-t)( 1 - (1 - t)^k)}
	\quad\text{and}\quad
	w_k(t) \ = \ \frac{k}{t(1 - t^k)} ,
\]
so
\[
	\frac{w_k(t)}{w_1(t)}
	\ = \ \frac{(1-t)(1 - (1-t)^k)}{t(1-t^k)}
	\ = \ \frac{\sum_{j=0}^{k-1} (1-t)^j}{\sum_{j=0}^{k-1} t^j}
\]
is strictly decreasing in $t$. Hence there is at most one solution of the equation $\pi_1 w_1(t) = \pi_k w_k(t)$.

Similarly, with $a_r(t) := \pi_r \beta_r(t) \big/ \sum_{s=1}^k \pi_s \beta_s(t)$,
\begin{align*}
	K^{\rm M}(t) \
	&= \ \sum_{r=1}^k \pi_r \beta_r(t) \cdot w_r(t)^{-1}
		\Big/ \Bigl( \sum_{s=1}^k \pi_s \beta_s(t) \Bigr)^2 \\
	&= \ \sum_{r=1}^k a_r(t) \cdot w_r(t)^{-1}
		\Big/ \sum_{s=1}^k \pi_s \beta_s(t) \\
	&\ge \ \Bigl( \sum_{r=1}^k a_r(t) w_r(t) \Bigr)^{-1}
		\Big/ \sum_{s=1}^k \pi_s \beta_s(t) \\
	&= \ \Bigl( \sum_{r=1}^k \pi_r \beta_r(t) w_r(t) \Bigr)^{-1}
		\ = \ K^{\rm L}(t) .
\end{align*}
Here the inequality is strict unless
\[
	w_1(t) = w_2(t) = \cdots = w_k(t) .
\]
But $w_1(t) = w_k(t)$ implies that $t = 1/2$. Moreover, $w_1(1/2) = 2k/(1 - 2^{-k})$ and
\[
	w_{k-1}(1/2) \ = \ \frac{2k(k-1)}{(k+1)(1 - (k+1)2^{-k})}
\]
are identical if, and only if, $k^2 + k + 2 = 2^{k+1}$. But $2^{k+1} = 2 \sum_{j=0}^k \binom{k}{j}$ is strictly larger than $2(1 + k + k(k-1)/2) = k^2 + k + 2$ if $k \ge 3$.

As to the ratios $E^{\rm Z}(t) := K^{\rm Z}(t)/K^{\rm L}(t)$, note first that
\begin{align*}
	E^{\rm S}(t) \
	&= \ \sum_{r=1}^k \frac{B_r(t)(1 - B_r(t))}{k^2 \pi_r}
		\sum_{s=1}^k \pi_s \beta_s(t) w_s(t) \\
	&\ge \ \min_{r,s=1,\ldots,k} \frac{B_r(t)(1 - B_r(t)) \beta_s(t) w_s(t)}{k^2}
		\Big/ \min_{r=1,\ldots,k} \pi_r \\
	&\to \ \infty \quad\text{as} \
		\min_{r=1,\ldots,k} \pi_r \downarrow 0 .
\end{align*}
On the other hand, with $a_r(t)$ as above,
\[
	E^{\rm M}(t)
	\ = \ \sum_{r=1}^k a_r(t) w_r(t)^{-1} \sum_{s=1}^k a_s(t) w_s(t)
	\ = \ \Ex(W) \Ex(W^{-1})
\]
with a random variable $W$ with distribution $\sum_{r=1}^k a_r(t) \delta_{w_r(t)}$. But with $\ell(t) := \min_r w_r(t)$ and $u(t) := \max_r w_r(t)$, convexity of $w \mapsto w^{-1}$ on $[\ell(t),u(t)]$ implies that
\[
	W^{-1} \ \le \ \frac{W - \ell(t)}{u(t) - \ell(t)} u(t)^{-1}
		+ \frac{u(t) - W}{u(t) - \ell(t)} \ell(t)^{-1} ,
\]
so
\begin{align*}
	\Ex(W) \Ex(W^{-1}) \
	&\le \ \Ex(W) \Bigl( \frac{\Ex(W) - \ell(t)}{u(t) - \ell(t)} u(t)^{-1}
		+ \frac{u(t) - \Ex(W)}{u(t) - \ell(t)} \ell(t)^{-1} \Bigr) \\
	&= \ \frac{\Ex(W) (\ell(t) + u(t) - \Ex(W))}{\ell(t) u(t)} \\
	&\le \ \frac{(\ell(t) + u(t))^2}{4 \ell(t)u(t)}
		\ = \ \frac{\rho(t) + \rho(t)^{-1} + 2}{4} .
\end{align*}
This upper bound for $E^{\rm M}(t)$ is attained approximately, if the distribution of $W$ aproaches the uniform distribution on $\{\ell(t),u(t)\}$. Hence we should choose $(\pi_r)_{r=1}^k$ as follows: Let $r(1), r(2)$ be two different numbers in $\{1,\ldots,k\}$ such that $w_{r(1)}(t) = \ell(t)$ and $w_{r(2)}(t) = u(t)$. Then let
\[
	\pi_r \ \approx \ \begin{cases}
		\beta_{r}(t)^{-1}/(\beta_{r(1)}^{-1} + \beta_{r(2)}^{-1})
		&\text{for} \ r \in \{r(1),r(2)\} , \\
		0
		&\text{for} \ r \not\in \{r(1),r(2)\} .
	\end{cases}
\]
The inequality $\rho(t) \le k$ follows from Lemma~\ref{lem:calculus} and the fact that $\rho(t)$ remains unchanged if we replace $w_r(t)$ with $\tilde{w}_r(t) = t(1-t) w_t(t) \in [1,k]$.
\end{proof}

\bigskip

\paragraph{Acknowledgement.}
Constructive comments by an associate editor and two referees are gratefully acknowledged.


\clearpage

\appendix

\section*{Supplementary material}

\section{Further proofs and technical details}

\begin{proof}[\bf Proof of Lemma~\ref{lem:compute.FhatL}]
Continuity of $L_n(x,\cdot) : [0,1] \to [-\infty,0]$ follows essentially from continuity of $\log : [0,1] \to [-\infty,0]$. For $p \in (0,1)$,
\[
	L_n'(x,p) \ = \ \sum_{r=1}^k N_{nr} \Bigl[ \frac{\beta_r}{B_r}(p) \hat{F}_{nr}(x)
		- \frac{\beta_r}{1 - B_r}(p) (1 - \hat{F}_{nr}(x)) \Bigr] .
\]
It follows from the formula $B_r(p) = \sum_{i=r}^k \binom{k}{i} p^i (1 - p)^{k-i}$ that
\begin{align*}
	\frac{\beta_r}{B_r}(p) \
	&= \ C_r \Big/ \sum_{i=r}^k \binom{k}{i} p^{i+1-r} (1 - p)^{r-i}
\intertext{and}
	\frac{\beta_r}{1 - B_r}(p) \
	&= \ C_r \Big/ \sum_{i=0}^{r-1} \binom{k}{i} p^{i+1-r} (1 - p)^{r-i}
\end{align*}
are strictly decreasing and strictly increasing in $p \in (0,1)$, respectively. Consequently, the derivative $L_n'(x,\cdot)$ is continuous and strictly decreasing on $(0,1)$.

Elementary algebra yields the alternative formula
\[
	L_n'(x,p) \ = \ \sum_{r=1}^k N_{nr} w_r(p) \bigl[ \hat{F}_{nr}(x) - B_r(p) \bigr]
\]
with the auxiliary function
\[
	w_r(p) \ = \ \frac{\beta_r}{B_r(1 - B_r)}(p)
	\ = \ \frac{\beta_r(p)}{B_r(p) B_{k+1-r}(1-p)} .
\]
The latter equation follows from the relation $1 - B_r(p) = B_{k+1-r}(1-p)$ and is highly recommended to avoid rounding errors in case of $p$ being close to $1$. Note also that
\begin{equation}
\label{eq:expansions w}
	w_r(p) \ = \ \begin{cases}
		\displaystyle
		\frac{r + o(1)}{p} & \text{as} \ p \to 0 , \\[2ex]
		\displaystyle
		\frac{k+1-r + o(1)}{1-p} & \text{as} \ p \to 1 .
	\end{cases}
\end{equation}
This implies that the limits of $L_n'(x,\cdot)$ at the boundary of $(0,1)$ satisfy
\begin{align*}
	L_n'(x,0) \ &= \ +\infty \quad\text{if} \ x \ge X_{(1)} , \\
	L_n'(x,1) \ &= \ -\infty \quad\text{if} \ x < X_{(n)} ,
\end{align*}
because $x \ge X_{(1)}$ implies that $\hat{F}_{nr}(x) > 0 = B_r(0)$ for at least one $r$, while $x < X_{(n)}$ implies that $\hat{F}_{nr} < 1 = B_r(1)$ for at least one $r$.
\end{proof}

\begin{proof}[\bf Proof of Lemma~\ref{lem:calculus}]
As to part~(a), note that $w_r$ is a rational and strictly positive function on $(0,1)$. Hence $\tilde{w}_r(t) := t(1-t) w_r(t)$ defines a function with these properties, too. Moreover, it follows from \eqref{eq:expansions w} that $\lim_{t \downarrow 0} \tilde{w}_r(t) = r$ and $\lim_{t \uparrow 1} \tilde{w}_r(t) = k-r+1$. Hence $\tilde{w}_r$ may be viewed as a rational and strictly positive function on a neighborhood of $[0,1]$. In particular, $\tilde{w}_k$ is continuously differentiable on $[0,1]$.

It remains to show that $1 \le \tilde{w}_r \le \max(r,k+1-r)$ on $[0,1]$. The upper bound follows from the fact that for $0 < t < 1$,
\begin{align*}
	\tilde{w}_r(t) \
	&= \ \frac{t(1-t) \beta_r(t)}{B_r(t)}
		+ \frac{t(1-t) \beta_r(t)}{B_{k-r+1}(1-t)} \\
	&= \ \frac{t^r(1-t)^{k-r+1}}{\int_0^t u^{r-1} (1-u)^{k-r} \, du}
		+ \frac{t^r(1-t)^{k-r+1}}{\int_0^{1-t} u^{k-r} (1 - u)^{r-1} \, du} \\
	&\le \ \frac{t^r(1-t)^{k-r+1}}{\int_0^t u^{r-1} \, du \ (1 - t)^{k-r}}
		+ \frac{t^r(1-t)^{k-r+1}}{\int_0^{1-t} u^{k-r} \, du \ t^{r-1}} \\
	&= \ (1 - t) \, r + t \, (k-r+1) \\
	&\le \ \max(r, k-r+1) .
\end{align*}
The lower bound is equivalent to the claim that $\beta_r(t) \ge B_r(t)(1 - B_r(t))/(t(1-t))$ for any $t \in (0,1)$. Since $\log \beta_r(u) = \log C_r + (r-1) \log u + (k-r) \log (1 - u)$ is concave in $u \in (0,1)$, this assertion follows from Lemma~\ref{lem:logconcave.on.01} below.

For proving part~(b), note first that $|p - t| \le c t(1-t)$ implies the inequalities $p \le (1+c)t$ and $1-p \le (1+c)(1-t)$. Moreover, since $\bigl| p(1-p) - t(1-t) \bigr| \le |p-t|$, we may conclude that $p(1-p) \ge (1-c) t(1-t)$. Consequently,
\begin{align*}
	\Bigl| \frac{w_r(p)}{w_r(t)} - 1 \Bigr| \
	&= \ \frac{\bigl| \tilde{w}_r(p) t(1-t) - \tilde{w}_r(t) p(1-p) \bigr|}
		{\tilde{w}_r(t) p(1-p)} \\
	&\le \ \frac{\bigl| \tilde{w}_r(p) - \tilde{w}_r(t) \bigr| t(1-t)
			+ \tilde{w}_r(t) \bigl| t(1-t) - p(1-p) \bigr|}
		{\tilde{w}_r(t) p(1-p)} \\
	&\le \ \frac{\bigl| \tilde{w}_r(p) - \tilde{w}_r(t) \bigr|/4 + C_w |t - p|}
		{c_w (1 - c) t(1-t)} \\
	&\le \ \frac{c_w'/4 + C_w}{c_w (1-c)} \, \frac{|p - t|}{t(1-t)} ,
\end{align*}
where $c_w' := \max_{1 \le r \le k, u \in [0,1]} |\tilde{w}_r'(u)|$. Moreover, for $\min(t,p) \le \xi \le \max(t,p)$,
\[
	\frac{|\beta_r'(\xi)|}{\beta_r(\xi)}
	\ = \ \frac{|r-1 - (k-1) \xi|}{\xi(1-\xi)}
	\ \le \ \frac{k-1}{(1-c)t(1-t)}
	\quad\text{and}\quad
	\frac{\beta_r(\xi)}{\beta_r(t)} \ \le \ (1 + c)^{k-1} .
\]
Hence Taylor's formula shows that for a suitable such $\xi$,
\[
	\Bigl| \frac{B_r(p) - B_r(t)}{\beta_r(t)(p-t)} - 1 \Bigr|
	\ = \ \frac{|\beta_r'(\xi)| |p-t|}{2 \beta_r(t)}
	\ \le \ \frac{(k-1)(1+c)^{k-1}}{c-1} \, \frac{|p-t|}{t(1-t)} .
\]\\[-7ex]
\end{proof}

In the proof of Lemma~\ref{lem:calculus} we referred to the following general inequality which is possibly of independent interest:

\begin{Lemma}
\label{lem:logconcave.on.01}
Let $\beta$ be a strictly positive probability density on $(0,1)$ such that $\log \beta$ is concave. Then its distribution function $B : [0,1] \to [0,1]$ satisfies the following inequalities: For any $t \in (0,1)$,
\[
	\beta(t) \ \ge \ \frac{B(t)(1 - B(t))}{t(1-t)}
\]
with equality if, and only if, $\beta \equiv 1$.
\end{Lemma}

\begin{proof}[\bf Proof of Lemma~\ref{lem:logconcave.on.01}]
For $a \in \R$ let $G_a : [0,1] \to [0,1]$ be the distribution function given by
\[
	G_a(x) \ := \ \begin{cases}
		(e^{ax} - 1)/(e^a - 1) & \text{if} \ a \ne 0 , \\
		x & \text{if} \ a = 0 .
	\end{cases}
\]
Then $G_a$ has log-linear density
\[
	g_a(x) := G_a'(x) \ = \ e^{ax - c(a)}
\]
with $c(0) = 0$ and $c(a) = \log((e^a - 1)/a)$ for $a \ne 0$. For fixed $t \in (0,1)$, $G_a(t)$ is continuous in $a \in \R$ with $\lim_{a \ge \infty} G_a(t) = 0$ and $\lim_{a \to -\infty} G_a(t) = 1$. Hence for a suitable $a = a(t) \in \R$,
\[
	B(t) \ = \ G_a(t) .
\]
If we fix this value $a$, then the previous equality implies that $\beta(s) \ge g_a(s)$ for some $s \in (0,t)$ and $\beta(u) \ge g_a(u)$ for some $u \in (t,1)$. But then concavity of $\log \beta$ and linearity of $\log g_a$ yield the inequality $\beta(t) \ge g_a(t)$. Moreover, if $\beta(t) = g_a(t)$, then $\beta \le g_a$, and this implies that $\beta \equiv g_a$. Hence it suffices to prove the claim in case of $\beta \equiv g_a$ for some $a \in \R$.

Since $g_0 \equiv 1$ and $G_0(t) = t$, the asserted inequality is an equality in case of $a = 0$. Hence it remains to show that $G_a(t)(1 - G_a(t)) < t(1-t) g_a(t)$ in case of $a \ne 0$. Indeed,
\begin{align*}
	\frac{G_a(t)(1 - G_a(t))}{t (1 - t) g_a(t)} \
	&= \ \frac{(e^{at} - 1)(e^a - e^{at})}
		{t(1 - t) e^{at} a (e^a - 1)} \\
	&= \ \frac{e^{at} - 1}{at} \cdot \frac{e^{a(1 - t)} - 1}{a(1 - t)}
		\Big/ \frac{e^a - 1}{a}
	&= \ \exp \bigl( h(at) + h(a - at) - h(a) \bigr) ,
\end{align*}
where $h(x) := \log((e^x - 1)/x)$ for $x \ne 0$. In case of $a > 0$ it follows from $\lim_{x \to 0} h(x) = 0$ that
\[
	h(at) + h(a(1 - t)) - h(a)
	\ = \ \int_0^{at} \bigl( h'(u) - h'(a(1 - t) + u) \bigr) \, du
	\ < \ 0 ,
\]
because $h''(x) = x^{-2} - (e^x + e^{-x} - 2)^{-1} > 0$, so $h'$ is strictly increasing. In case of $a < 0$, it follows from $h(x) = x + h(-x)$ that
\[
	h(at) + h(a(1 - t)) - h(a)
	\ = \ h(|a|t) + h(|a|(1 - t)) - h(|a|)
	\ < \ 0
\]
as well.
\end{proof}

\paragraph{Details about asymptotic variances and the function $\rho$ in case of $k = 2$.}
In the special case $k = 2$, elementary calculations reveal that
\begin{align*}
	&\beta_1(t) \ = \ 1 - u , \quad
		B_1(1 - B_1)(t) \ = \ K(t) \frac{3 - 4u + u^2}{4} , \quad
		w_1(t) \ = \ \frac{4}{K(t) (3 - u)} , \\
	&\beta_2(t) \ = \ 1 + u , \quad
		B_2(1 - B_2)(t) \ = \ K(t) \frac{3 + 4u + u^2}{4} , \quad
		w_2(t) \ = \ \frac{2}{K(t)(3+u)} ,
\end{align*}
where $u := 2t - 1 \in [-1,1]$ and $K(t) := K(t,t) = t(1 - t)$. In particular,
\[
	\tilde{w}_1(t) \ = \ \frac{4}{3 - u}, \quad
	\tilde{w}_2(t) \ = \ \frac{4}{3 + u}
	\quad\text{and}\quad
	\frac{\rho(t) + \rho(t)^{-1} + 2}{4}
	\ = \ \frac{9}{9 - u^2} .
\]
Moreover, with $\Delta := \pi_2 - \pi_1$ these formulae entail that
\begin{align*}
	K^{\rm S}(t) \
	&= \ \frac{K(t)}{4} \frac{3 + u^2 - 4u\Delta}{(1 - \Delta^2)} , \\
	K^{\rm M}(t) \
	&= \ \frac{K(t)}{4}
		\frac{3 + u^2 + 4u\Delta}{(1 + u\Delta)^2} , \\
	K^{\rm L}(t) \
	&= \ \frac{K(t)}{4}
		\frac{9 - u^2}
		     {3 - u^2 + 2u\Delta} .
\end{align*}
The top left panel in Figure~\ref{fig:Asymptotics2} shows for $\pi_1 = \pi_2 = 1/2$ the asymptotic variances $K^{\rm M}(t) = K^{\rm S}(t) > K^{\rm L}(t)$ as well as the variances $K(t)$ for simple random sampling. In the top right and lower panels one sees for $\pi_1 = 1 - \pi_2 = 1/2, 5/8, 3/4$ the relative asymptotic efficiencies $E^{\rm M}(t) = K^{\rm M}(t)/K^{\rm L}(t)$ and $E^{\rm S}(t) = K^{\rm S}(t)/K^{\rm L}(t)$ of $\BhatL$ with respect to $\BhatM$ and $\BhatS$, respectively. In each panel the gray dotted line depicts the upper bound $E_{\rm max}^M(t) = (\rho(t) + \rho(t)^{-1})/4 \le 1.125$ for $E^{\rm M}(t)$. Note that $E^{\rm S}(t)$ can get arbitrarily large.

\begin{figure}
\includegraphics[width=0.49\textwidth]{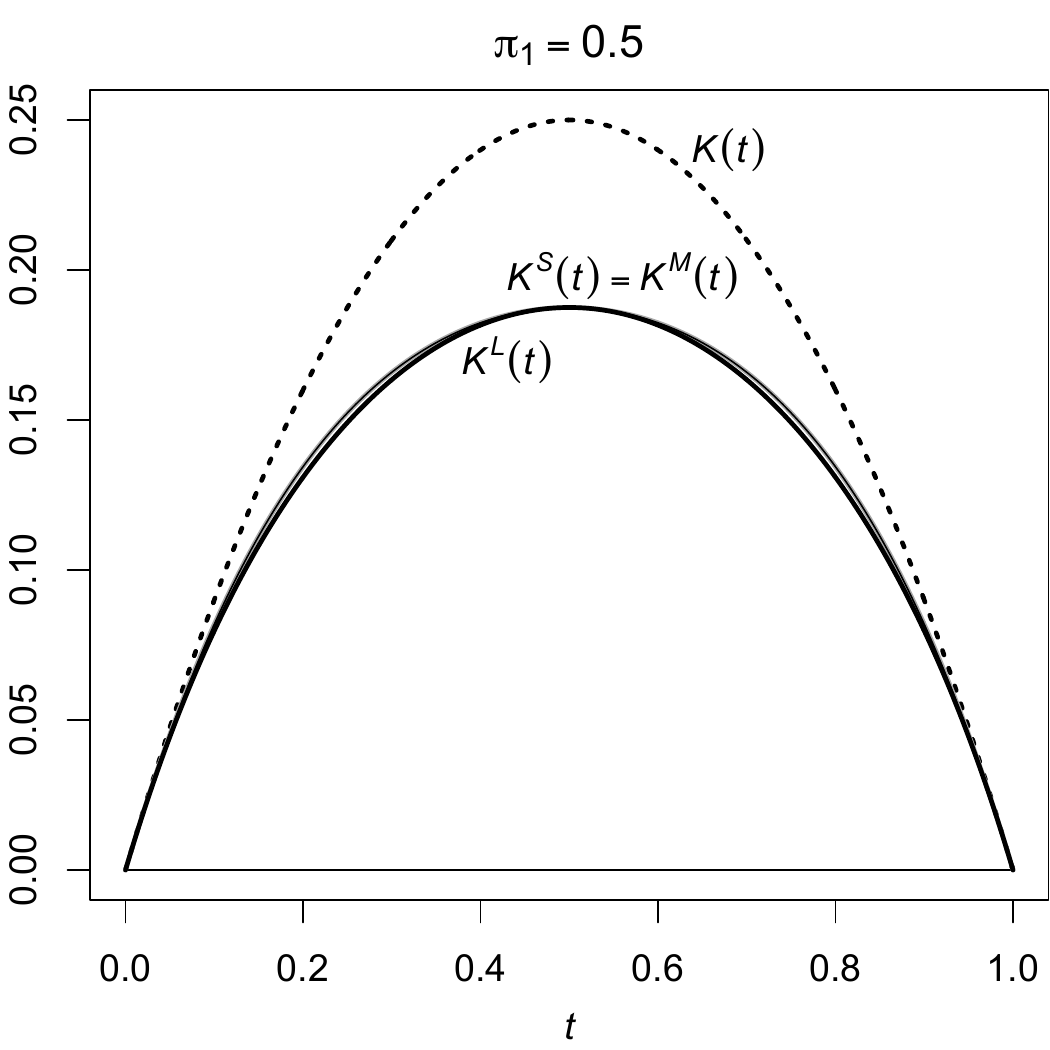}
\hfill
\includegraphics[width=0.49\textwidth]{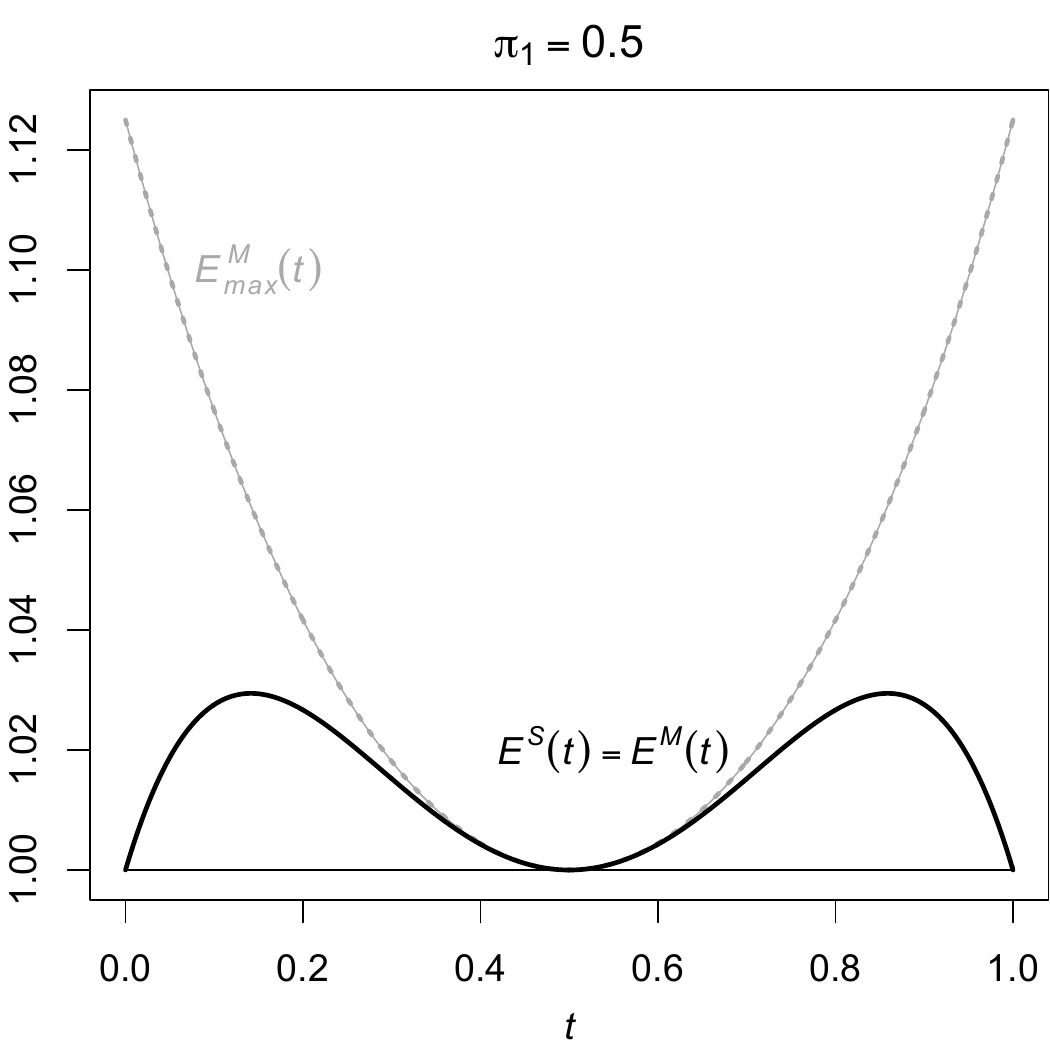}

\includegraphics[width=0.49\textwidth]{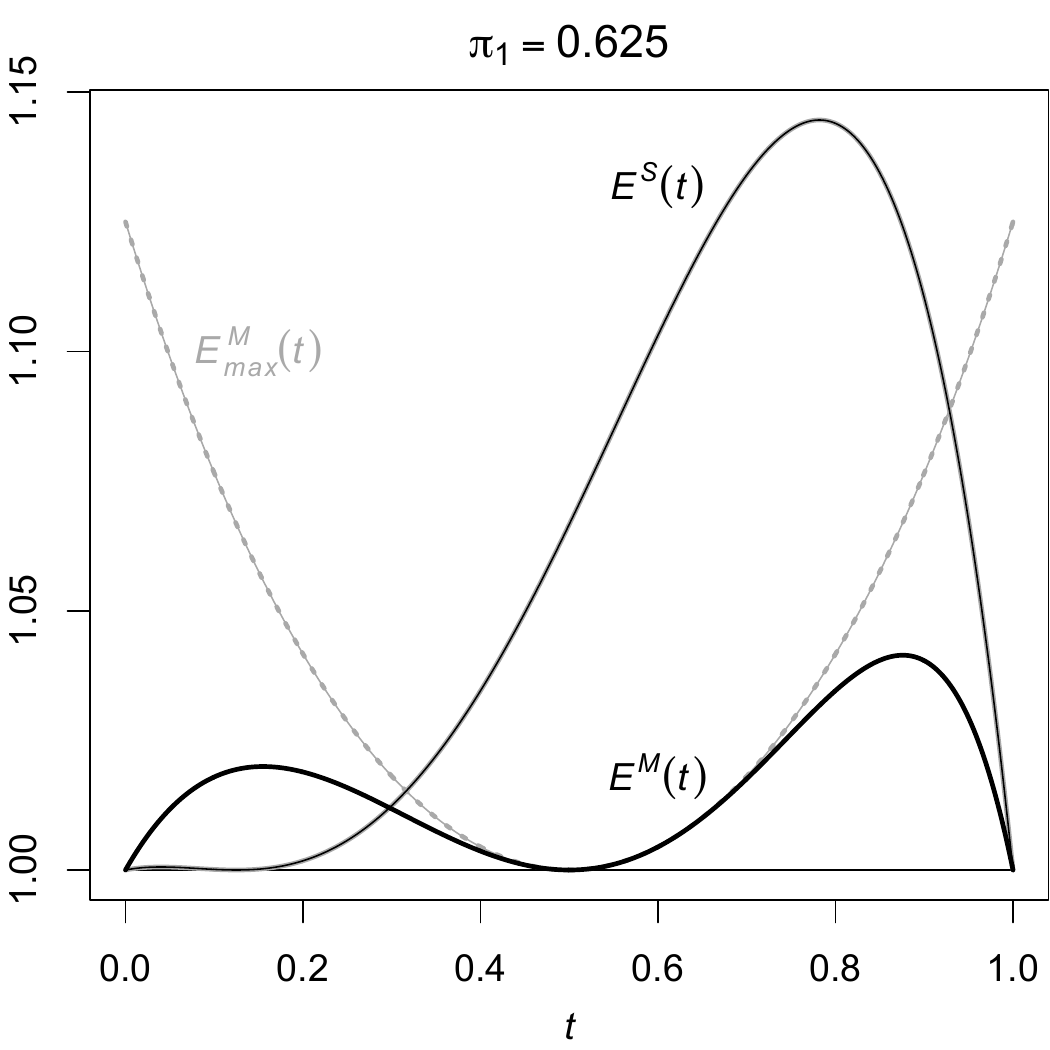}
\hfill
\includegraphics[width=0.49\textwidth]{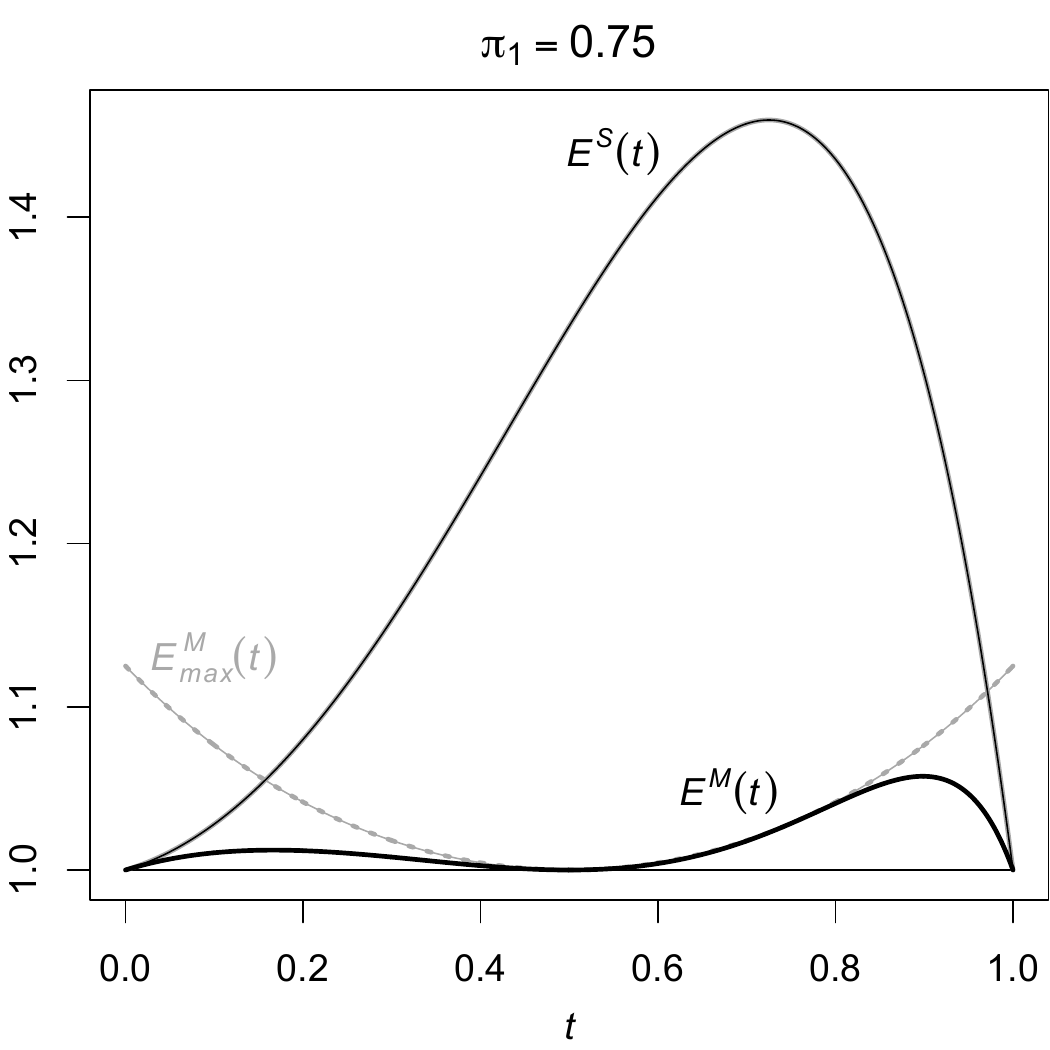}

\caption{Asymptotic variances and relative efficiencies for $k = 2$.}
\label{fig:Asymptotics2}
\end{figure}

\begin{proof}[\bf Proof of \eqref{eq:Asymptotics.pw1} and \eqref{eq:Asymptotics.pw2}.]
Let $(\hat{p}_n)_n, (\hat{q}_n)_n$ be random sequences in $[0,1]$ converging to $p := F(x)$ in probability. It follows from Lindeberg's Central Limit theorem, applied to convolutions of binomial distributions, that
\[
	\sup_{y \in \R} \, \Bigl| G_{\Nn,\hat{p}_n}(y)
		- \Phi \Bigl( \frac{\sqrt{n}}{\sigma(p)}
				\Bigl( \frac{y}{n} - \mu_n(\hat{p}_n) \Bigr) \Bigr)
		\Bigr|
	\ \to_p \ 0 ,  
\]
where
\begin{align*}
	\mu_n(q) \
	&:= \ \sum_{r=1}^k \frac{N_{nr}}{n} B_r(q)
		\quad\text{for} \ q \in [0,1] , \\
	\sigma(p) \
	&:= \ \Bigl( \sum_{r=1}^k \pi_r B_r(p)(1 - B_r(p)) \Bigr)^{1/2} .
\end{align*}
Moreover,
\begin{align*}
	\mu_n(\hat{q}_n) - \mu_n(\hat{p}_n) \
	&= \ \sum_{r=1}^k \frac{N_{nr}}{n} \bigl( B_r(\hat{q}_n) - B_r(\hat{p}_n) \bigr) \\
	&= \ \Bigl( \sum_{r=1}^k \pi_r \beta_r(p) + o_p(1) \Bigr) (\hat{q}_n - \hat{p}_n) \\
	&= \ \Bigl( \frac{\sigma(p)}{K^{\rm M}(F(x))^{1/2}} + o_p(1) \Bigr)
		(\hat{q}_n - \hat{p}_n) .
\end{align*}
Now we apply these findings to
\[
	\hat{p}_n \ := \ \FhatM(x) + \frac{\Delta}{\sqrt{n}}
	\quad\text{and}\quad
	\hat{q}_n \ := \ \FhatM(x)
\]
with $\Delta \in \R$ to be specified later. Note that $\mu_n(\hat{q}_n) = \Fhat(x)$ by definition of $\FhatM(x)$. Hence for $c = 0, 1$,
\begin{align*}
	G_{\Nn,\hat{p}_n}(n \Fhat(x) - c) \
	&= \ \Phi \Bigl( \frac{\sqrt{n}}{\sigma(p)}
		\bigl( \Fhat(x) + O(n^{-1}) - \mu_n(\hat{p}_n) \bigr) \Bigr) + o_p(1) \\
	&= \ \Phi \Bigl( \frac{\sqrt{n}}{\sigma(p)}
		\bigl( \mu_n(\hat{q}_n) - \mu_n(\hat{p}_n) \bigr) \Bigr) + o_p(1) \\
	&\to_p \ \Phi \Bigl( \frac{- \Delta}{K^{\rm M}(F(x))^{1/2}} \Bigr) .
\end{align*}
If we choose $\Delta$ strictly smaller or strictly larger than $K^{\rm M}(F(x))^{1/2} \Phi^{-1}(1 - \alpha)$, then the limit of $G_{\Nn,\hat{p}_n}(n \Fhat(x))$ is strictly larger or strictly smaller than $\alpha$, respectively. This proves \eqref{eq:Asymptotics.pw2}. If we choose $\Delta$ strictly smaller or strictly larger than $- K^{\rm M}(F(x))^{1/2} \Phi^{-1}(1 - \alpha)$, then the limit of $G_{\Nn,\hat{p}_n}(n\Fhat(x) - 1)$ is strictly larger or strictly smaller than $1 - \alpha$, respectively, which proves \eqref{eq:Asymptotics.pw1}.
\end{proof}

\section{Computer code}

On the first author's web page (www.stat.unibe.ch/duembgen) one can download specific computer programs for the methods and examples presented here. All code is for the statistical computing environment \nocite{R_2013}{R}. The files are:

\begin{itemize}
\item Extimation.R: \ Computation of the point estimators $\FhatS$, $\FhatM$ and $\FhatL$.
\item Simulations.R: \ Simulation of RSS and JPS data sets, including sampling from the Dell--Clutter model.
\item ConfBands.R: \ Computing pointwise and simultaneaous confidence bands for $F$.
\item MonteCarlo.R: \ Monte Carlo estimation of the estimators' bias and RMSE; simulating sampling from a finite population as in Section~\ref{subsec:Municip_CH}.
\item Municip\_CH\_2015.txt: \ Data for Section~\ref{subsec:Municip_CH}.
\item MainScript.R: \ Main script file with examples for all procedures coded in the previous R files.
\end{itemize}
\end{document}